\tikzset{block/.style={draw, thick, text width=8cm ,minimum height=0.75cm, align=center},   
line/.style={-latex}     
}  
\newcommand{\crs}{\mathsf{crs}}
\newcommand{\LL}{\mathcal{L}}
\newcommand{\OO}{\mathcal{O}}
\newcommand{\uniform}{\xleftarrow{\$}
}
\newcommand{\bH}{\mathbf{H}}
\newcommand{\GAME}{\mathbf{Game}}
\newcommand{\PVSS}{\mathsf{PVSS}}
\newcommand{\Dec}{\mathsf{Dec}}
\newcommand{\Key}{\mathsf{Key}}
\newcommand{\Gen}{\mathsf{Gen}}
\newcommand{\xmark}{\ding{55}}%
\newcommand{\Adversary}{\mathcal{A}}
\newcommand{\PKE}{\mathsf{PKE}}
\newcommand{\interaction}[1]{\left\langle #1\right\rangle}
\newcommand{\negl}{\mathsf{negl}}
\newcommand{\pk}{\mathsf{pk}}
\newcommand{\bx}{\mathbf{x}}
\newcommand{\bb}{\mathbf{b}}
\newcommand{\CRS}{\mathcal{CRS}}
\newcommand{\sk}{\mathsf{sk}}
\newcommand{\tempcaption}{}
\newcommand{\condprob}[2]{\Pr\left[ \begin{array}{l}
		#1
	\end{array}\;\middle|\;\begin{array}{l}
		#2
	\end{array} \right]}
\newcommand{\Share}{\mathsf{Share}}
\newcommand{\st}{\mathsf{st}}
\newcommand{\DRNG}{\mathsf{DRNG}}
\newcommand{\Init}{\mathsf{Init}}
\newcommand{\RandGen}{\mathsf{RandGen}}
\newcommand{\qual}{\mathsf{QUAL}}
\newcommand{\etal}{\textit{et al.}}
\newcommand{\pp}{\mathsf{pp}}
\newcommand{\by}{\mathbf{y}}
\newcommand{\bm}{\mathbf{m}}
 \newcommand{\bs}{\mathbf{s}}
 \newcommand{\be}{\mathbf{e}}
  \newcommand{\bA}{\mathbf{A}}
        \newcommand{\bzero}{\mathbf{0}}
   \newcommand{\RR}{\mathcal{R}}
 \newcommand{\uniformly}{\stackrel{\$}{\leftarrow}}
  \newcommand{\br}{\mathbf{r}}
  \newcommand{\Enc}{\mathsf{Enc}}
\newcommand{\SSS}{\mathsf{SSS}}
\newcommand{\Prove}{\mathsf{Prove}} 
\newcommand{\Verify}{\mathsf{Ver}}
\newcommand{\NIZK}{\mathsf{NIZK}} 
\newcommand{\Setup}{\mathsf{Setup}} 
\newcommand{\Combine}{\mathsf{Combine}} 
\newcommand{\bc}{\mathbf{c}} 
\newcommand{\Adv}{\mathbf{Adv}}
\newcommand{\CC}{\mathcal{C}}
\newcommand{\Correctness}{\mathsf{Correctness}}
\newenvironment{framedfigure}[1][\makered{Missing caption}]
{
	\def\myenvargumentI{#1} 
	\begin{figure*}[!h]
    
		      \begin{tabular}{|p{\textwidth}|}

			\hline 
            \footnotesize
		}
		{ 
        \footnotesize
			\\\hline
		\end{tabular}
		\caption{\myenvargumentI}
		\vspace*{-1em}
	\end{figure*}
}
\begin{document}
\title{Post-Quantum Secure Decentralized Random Number Generation Protocol with Two Rounds of Communication in the Standard Model\thanks{\textit{Manuscript has been accepted for publication in the Proceedings of the 12th International Conference on Future Data and Security Engineering (FDSE 2025, \url{https://thefdse.org/index.html}), published by Springer Verlag in Communications in Computer and Information Science (CCIS) Series, \url{https://www.springer.com/series/7899}.}}
}
\titlerunning{Post-Quantum-DRNG}
%
\author{Pham Nhat Minh\inst{1,2}  \and Khuong Nguyen-An\inst{1,2}\textsuperscript{(\Letter)}
}
\authorrunning{Minh and Khuong}
%
\institute{Department of Computer Science, Faculty of Computer Science and Engineering,\\ 
Ho Chi Minh City University of Technology (HCMUT), 268 Ly Thuong Kiet Street, District 10, Ho Chi Minh City, Vietnam \\
Emails: \email{\{pnminh.sdh232, nakhuong\}@hcmut.edu.vn}\\
\and
Vietnam National University Ho Chi Minh City, Linh Trung Ward,\\ Thu Duc City, Ho Chi Minh City, Vietnam \\
}
%

%

%
\maketitle              
\sloppy 
\begin{abstract} 
Randomness plays a vital role in numerous applications, including simulation, cryptography, distributed systems, and gaming.
Consequently, extensive research has been conducted to generate randomness.
One such method is to design a decentralized random number generator (DRNG), a protocol that enables multiple participants to collaboratively generate random outputs that must be publicly verifiable.
However, existing DRNGs are either not secure against quantum computers or depend on the random oracle model (ROM) to achieve security.
In this paper, we design a DRNG based on lattice-based publicly verifiable secret sharing (PVSS) that is post-quantum secure and proven secure in the standard model.
Additionally, our DRNG requires only two rounds of communication to generate a single (pseudo)random value and can tolerate up to any $t < n/2$ dishonest participants.
To our knowledge, the proposed DRNG construction is the first to achieve all these properties.

\keywords{publicly verifiable secret sharing, decentralized random number generator, post-quantum, standard model}
\end{abstract}
\section{Introduction and Related Works}
A reliable source of randomness is essential in many applications, including lotteries, gaming, e-voting, simulation, and cryptography. Due to the widespread applications of randomness, substantial efforts have been dedicated to generating random numbers. A natural approach is to rely on a single party, such as \cite{Mads10}, to generate the entire sequence of random numbers. However, this involves the risk that the centralized party might insert trapdoors to bias the supposed random output or secretly sell them to outsiders. Hence, it would be preferable to design protocols that allow \textit{multiple participants} to participate together in the random generation process to reduce the role of a trusted third party. In addition, these protocols must be publicly verifiable by an external verifier. This protocol is called \textit{ a decentralized random number generator} (DRNG) \cite{nguyen2019system,SSTE20}. With the development of decentralized applications, the number of DRNGs has begun to rise very rapidly \cite{SJKGGKFF17,CD17,GHMVZ17,SJSW18,DGKR18,nguyen2019system,CSS19,CD20,GLOW20,DFI22,DKMR21,BLLOP23,CATB23,ZLOKZ25}. Unfortunately, the majority of DRNGs depend on the hardness of either the \textit{discrete log} (DL) or the integer factorization problem for security. These problems can be solved by the Shor quantum algorithm \cite{Schor99}. Hence, an adversary can use a quantum computer to predict the outputs of the DRNG, compromising the security of these protocols.

Another problem with these DRNGs is that most existing DRNGs have been proven to secure the random oracle model (ROM) instead of the standard model. Technically, ROM assumes that all participants have access to a \textit{public} oracle that i) returns a truly random output for each unqueried input, and ii) if the input were queried previously, it would return the same previous output \cite{BR93,CGH04}. Such an oracle cannot be implemented in the real world \cite{CGH04}, so assuming such an oracle is a nonstandard assumption. The best one can do in practice is to heuristically instantiate the oracle with a specific cryptographic hash function. However, there have been \textit{counterexamples} of protocols that are secure in the ROM, but become insecure when the random oracle (RO) is instantiated with any hash function, such as \cite{GK03,CGH04,KRS25}.
These counterexamples raise the concern that most natural protocols secure in ROM could still achieve security in the real world. Among DRNGs, except for several PVSS-based constructions such as \cite{CD17,DKMR21} (which are not post-quantum secure), existing DRNGs need to employ ROM to achieve security. The use of ROM is due to one of the following: i) their protocols rely on non-interactive zero-knowledge arguments (NIZK) and employ the Fiat-Shamir paradigm (such as \cite{KRDO17,CD20}), or ii) the current output is computed as the hash digest of the partial outputs combined (such as \cite{GHMVZ17,SJSW18,DGKR18,SJHSW20}), and this hash must be modeled as some RO. In either case, the use of ROM is not a standard method to capture security in the real world.

Another desired property in any protocol is to minimize the number of rounds (round complexity). Indeed, in a network, sending a message to others requires an amount of time due to network latency time $\Delta$. So if there are $r$ communication rounds, it takes at least $r\cdot \Delta$ time to complete the protocol. Also, if many online rounds are required, participants need to be online and responsive more frequently, and are more likely to be vulnerable to network-based attacks. Hence, it would be essential to design protocols to minimize the number of rounds so that i) the time wasted by network latency is reduced, ii) participants can act more independently and are less likely to be affected by network-related attacks.  If only the standard model and post-quantum security are required, then one might attempt to construct DRNGs from coin-flipping protocols \cite{BD84,B85}  where, in each execution, the participants agree on one single random bit. However, to generate a random number of $\lambda$ bits, participants need at least $\Omega(\lambda)$ rounds of communication. In addition, coin-flipping protocols allow adversaries to bias the output to an extent, depending on the round complexity \cite{DGLD22,C86} or number of participants \cite{BD84} for generating a single bit. Hence, generating random outputs by coin-flipping protocols will require a lot of rounds, which is not desirable.

Given the importance of the three properties above, it would be desirable to construct a DRNG that  \textit{can achieve post-quantum security} in \textit{the standard model}, and the round complexity should \textit{be as small as possible}.
\subsection{Our Contribution}\label{section-our-contribution}
 We construct a DRNG that is i) \textit{post-quantum secure}, ii) requires only \textit{$2$ rounds of communication} to generate a single random value, iii) achieves security in the \textit{standard model}, and finally, iv) can tolerate any $t<n/2$ dishonest participants using the lattice-based PVSS of \cite{MNSN25} (The PVSS is also post-quantum secure, and is proven secure in the standard model. In fact, these properties in the PVSS imply the desired properties in our DRNG). As a trade-off, our construction requires all participants to agree on a common reference string (CRS, which can be understood as a string drawn from a specific distribution and is broadcast to all participants) generated by a third party. However, this is acceptable, as we need to rely on a third party once to generate a correct CRS. Then participants use the string to jointly generate many random numbers (without further third-party involvement), instead of letting a third party control the whole random generating process (so its role is \textit{greatly reduced}, but not entirely removed). In fact, having a CRS is also implicitly required for other DRNGs as well: For example, in DRNGs rely on the DL problem, participants have to agree on a cyclic group $\mathbb{G}$ with generator $g$ where the DL problem is assumed to be hard (the CRS). Such a group needs to be proposed by a third party (such as the parameter of Secp256k1 was recommended in \cite{Sec2}). For more examples, \cite{CATB23} also relies on a third party to generate the CRS for participants (but only achieves security in ROM). Now, to sum up, assuming participants having access to a CRS, our DRNG is the first to simultaneously satisfy all four properties: i) secure against quantum computers, ii) require only two rounds, iii) proven secure the standard model, and iv) can tolerate up to $t<n/2$ dishonest participants. 

 \subsection{Related Works}
 \noindent \textbf{Existing DRNGs.} Various primitives can be taken to construct DRNGs as follows: i) DRNG from hash, ii) DRNG from PVSS, iii) DRNG from verifiable random function (VRF) \cite{MRV99}, iv) DRNG from verifiable delay function (VDF) \cite{BBBF19}, and v) DRNG from homomorphic encryption (HE). 
 
 Hash-based DRNGs, such as \cite{Ra17,AMM18}, are insecure. For example, in RANDAO \cite{Ra17}, each participant $P_i$ generates a secret $s_i$ and needs to provide a hash digest of $s_i$ as its commitment. After all participants have provided their digests, the participants $P_i$ reveal the corresponding secret $s_i$, and the output $\Omega$ is calculated as the XOR of all secrets $s_i$. However, the last participant can choose not to reveal his $s_i$ if $\Omega$ does not benefit him, causing the protocol to abort and restart (also known as \textit{withholding attack}). Hence, the output of these protocols can be biased. Some VRF-based protocols, such as \cite{GHMVZ17,DGKR18}, also have the same issue.
 
 DRNGs from PVSS can be divided into two approaches: With leaders and without leaders. Those with leaders such as \cite{SJSW18,BSLKN21,ZLOKZ25} only achieve weak security in the sense that the unpredictability of the output is only achieved from the $(t+1)$-th epoch. PVSS-based DRNGs without leaders \cite{CD17,SJKGGKFF17,CD20} or threshold VRFs \cite{GLOW20,DFI22,BLLOP23} enjoy full security properties. Unfortunately, they are based on the discrete log assumption. Hence, they are not secure against quantum computers.

For VDF-based DRNG, we first recall VDF: It is an algorithm that requires a lot of time to compute the outputs, but it can be verified very quickly. One idea for constructing DRNG is to combine the VDF with RANDAO: After participants execute RANDAO in a short period of time, they receive a value $\Omega'$, which is the input of the VDF to generate the final output $\Omega$. Since it takes a very long time to compute $\Omega$ from $\Omega'$, the adversary is not incentivized to keep his secret, or he will be discarded without knowing whether $\Omega$ benefits him or not. However, it does not prevent an adversary from learning the output sooner than the participants. Adversaries with improved hardware might compute the VDF output several times faster \cite{Vi18} (say, $5$ times \cite{HHY20}). Because VDF requires a very long time to compute (say, a day), $5$ times faster means that the adversary will know the output much sooner (several hours) and can secretly leak it for their benefit. Another idea is to use VDF with trapdoors \cite{SJHSW20}. In trapdoor VDF-based DRNGs such as \cite{SJHSW20,CATB23}, the idea is that each participant has a VDF with a trapdoor, and the participants will use the trapdoor to quickly compute the outputs. Then, other participants will compute the output of any participant who withholds it, but they need a lot of time to do so. The problem is that if withholding happens, honest participants must compute the VDF result without trapdoors, and hence, they still know the output much later than dishonest ones.

 The construction using HE includes \cite{nguyen2019system,Ng19} and \cite{CSS19}. In \cite{nguyen2019system,Ng19}, the participants collaborate with the client to generate randomness. The client generates a pair of public-secret key pairs $(\pk,\sk)$, and then the participants use $\pk$ to encrypt their contributions. The client then aggregates the encryption and uses $\sk$ to obtain randomness. However, the scheme requires every client to be honest and it is pointed out that the scheme becomes insecure when any dishonest client colludes with a dishonest participant to manipulate the output. The scheme of \cite{CSS19} is based on a threshold decryption scheme and achieves full security properties. However, it is not post-quantum secure as it relies on the discrete log problem (and the construction also relies on the ROM to achieve security). 
 
To our knowledge, no work has so far constructed a post-quantum secure DRNG with at most $2$ rounds in the standard model, and our DRNG is the first one to do so. For a comparison with selected previous works, we refer to Table \ref{table:comparision of DRNG}. \\
 
\noindent \textbf{HashRand.} HashRand \cite{BBBK24} also attempted to propose a post-quantum DRNG from verifiable secret sharing. Compared to theirs, our protocol incurs higher communication and computation complexities (more time \textit{incurred by the local computation of participants}). However, our protocol requires only two rounds to generate a random output, and participants are guaranteed to agree on a random value. HashRand, on the other hand, requires at least three rounds (using Gather primitive in \cite{AJMMST21}), can only tolerate $t<n/3$ dishonest participants, and there is a probability $p$ that participants can disagree on the random output. As in \cite[Section 3.3]{BBBK24}, the lower $p$ is, the more rounds HashRand requires. Also, HashRand is hash-based and relies on ROM, while ours is based on lattice assumption in the standard model. Hence, our construction has advantages over Hashrand in that it requires fewer rounds (less time \textit{incurred by network latency}), tolerates more dishonest participants, and achieves security in the standard model.

\begin{table}\centering
\caption{\small{Comparison with selected DRNGs. Communication complexity refers to the total number of bits sent by participants. For computation complexity/node, we mean the number of computation steps by a single participant. This is one of the two factors affecting the total time to generate an output. The other is the number of rounds, which also affects the total time, since more rounds imply more time wasted by network latency, as discussed earlier. In the table, we consider $n$ as the main parameter in these complexities, similar to other works \cite{CSS19}.  Also, $p$ is the probability that participants agree on the same random output in HashRand.  We use $\Omega$ for estimating the costs of our DRNG. The reason for this is described in Section \ref{seciton-complexity-analysis}. }}
{\scalebox{0.7}{
\begin{tabular}{lccccccccccc}\label{table:comparision of DRNG}
\\
\hline
\hline
             & \rotatebox{270}{\textit{Pseudo-randomness}} & \rotatebox{270}{\textit{Unpredictability}} & \rotatebox{270}{\textit{Bias-resistance }} & \rotatebox{270}{\textit{Liveness}} & \rotatebox{270}{\textit{Public Verifiability}} & \rotatebox{270}{\textit{Comm. Cost}} & \rotatebox{270}{\textit{Comp. Cost/ node}} 
             & \rotatebox{270}{\textit{Rounds/value\quad\quad}}
             & \rotatebox{270}{\textit{Honest nodes}} & \rotatebox{270}{\textit{Primitives}} & \rotatebox{270}{\textit{Assumptions}} \\
\hline
\hline
            RANDAO \cite{Ra17} & \xmark & \checkmark & \xmark & \xmark & \checkmark & 
              $O(n)$ & $O(n)$ & $2$ & $1$ & Com.+ Rev. & Hash\\
             Scrape \cite{CD17} & \checkmark & \checkmark & \checkmark & \checkmark  & \checkmark 
             & $O(n^2)$ & $O(n^2)$ & $2$ & ${n}/{2}$ & PVSS & DLOG+CRS\\
              HydRand \cite{SJSW18} & \checkmark & \checkmark & \checkmark & \checkmark &    \checkmark &
              $O(n^2)$ & $O(n)$ & $2$ & $2n/3$ & PVSS & DLOG+CRS+ROM\\
                Algorand \cite{GHMVZ17} & \xmark & \checkmark &  \xmark & \checkmark &   \checkmark &
              $O(n)$ & $O(n)$ & $1$ & ${2n}/{3}$ & VRF & CRS+ROM\\

              Nguyen \etal \cite{nguyen2019system} & \xmark & \checkmark &  \xmark & \xmark &   \checkmark &
              $O(n)$ & $O(n)$ & $1$ & $1$ & HE& DLOG+CRS+ROM\\
              Randrunner \cite{SJHSW20} & \xmark & \checkmark &  \checkmark & \checkmark & \checkmark &  $O(n)$ & VDF time & $1$ & ${n}/{2}$ & VDF & Factoring+ROM\\

  Albatross \cite{CD20} & \checkmark & \checkmark & \checkmark & \checkmark  & \checkmark 
             & $O(n^2)$ & $O(n^2 \log^2 n)$ & $2$ & ${n}/{2}$ & PVSS & DLOG+CRS+ROM\\

              drand \cite{Dr20} & \checkmark & \checkmark &  \checkmark & \checkmark 
             & \checkmark & $O(n)$ &  $O(n\log^2n)$  & $1$ & ${n}/{2}$ & TSS.& DLOG+CRS+ROM\\

             Bicorn \cite{CATB23} & \xmark & \checkmark & \checkmark & \checkmark & \checkmark & 
              $O(n)$ & VDF time & $2$ & $2$ & VDF & CRS+Factoring+ROM \\

                  HashRand \cite{BBBK24} & \xmark & \checkmark &  \checkmark & \checkmark 
             & \checkmark & $O(n^2 \log^2 n)$ &  $O(n^2\log^2 n)$  & $\log 1/p$ & ${n}/{3}$ & VSS & Hash (needs ROM)\\
\hline
              {\textbf{Ours}} & \checkmark & \checkmark &  \checkmark & \checkmark &\checkmark
             & $\Omega(n^2)$ &  $\Omega(n^3)$  & $2$ & ${n}/{2}$ & PVSS & LWE+CRS\\
\hline
\hline
\end{tabular}
}}
\end{table}

 \subsection{Why from PVSS}
 One might ask why we use PVSS to construct the DRNG instead of a distributed VRF such as \cite{DFI22,GLOW20}. The reason is that we aim to achieve post-quantum security in the standard model. However, current techniques for lattice-based VRF, such as \cite{ESLR22}, require ROM to achieve security. Unfortunately, we are not aware of any direction in which to construct lattice-based VRFs in the standard model. Howerver, as shown by \cite{MNSN25}, it is possible to construct lattice-based PVSS in the standard model, and as a result, the PVSS allows achieving post-quantum DRNG in the standard model. Therefore, we decided to construct DRNG from the lattice-based PVSS of \cite{MNSN25} instead.

\subsection{Paper Organization}
The rest of the paper is organized as follows: Section \ref{section-preliminaries} recalls the necessary background. Section \ref{section-the-underlying-pvss} briefly describes the PVSS of \cite{MNSN25}, which is the lattice-based PVSS we will use in our DRNG. Section \ref{section-drng} describes the generic DRNG construction from any PVSS without RO. We also provide the security proof of the generic DRNG and analyze the complexities when instantiated with the PVSS of \cite{MNSN25}. Finally, Section \ref{section-conclusion} concludes our paper.

\section{Preliminaries}\label{section-preliminaries}

 For $p \geq 2$, denote $\mathbb{Z}_p$ to be the ring of integers mod $p$. For $\bx \in \mathbb{Z}^v$, let $\rho_{\sigma}(\bx)=e^{-\pi \cdot ||\bx||^2/\sigma^2}$, where $||\bx||$ is the Euclidean norm of $\bx$. We denote $D_{\mathbb{Z}^v,\sigma}$ be the discrete Gaussian distribution that assigns probability equal to $\rho_{\sigma}(\bx)/(\sum_{\by \in \mathbb{Z}^v} \rho_{\sigma}(\by))$ for each $\bx \in \mathbb{Z}^v$. We denote $[n]=\{1,\dots,n\}$ and $\negl(\lambda)$ to denote a negligible function in $\lambda$ (see \cite{ACPS09,CCHLRRW19} for definition). We denote $x \leftarrow \mathcal{D}$ to say that $x$ is sampled from a distribution $ \mathcal{D}$ and  $x \xleftarrow{\$}
 \mathcal{S}$ to say that $x$ is uniformly sampled from a set $\mathcal{S}$. We consider a \textit{synchronous} network: The time between messages is bounded within a value $\Delta$. We assume a \textit{broadcast channel}, where everyone can see a message broadcasted by a participant. The adversary $\Adversary$ is \textit{static}: $\Adversary$ corrupts a set of $t<n/2$ participants before the start of the protocol and acts on their behalf. In the DRNG, we use the word \textit{epoch}. The $r$-th epoch is the $r$-th time participants execute the DRNG to generate an output. Each epoch is divided into rounds. In each round, participants perform some local computation and then simultaneously \textit{exchange messages} (in parallel) with each other, before going to the next round. Some protocols \cite{SJSW18,GLOW20} call an epoch a round, but we use the definition above to analyze the round complexity in a single execution (similar to other works  \cite{CGGSP21,K24}).  

\subsection{Lattices}\label{section-lattice}
\begin{definition}[LWE Assumption \cite{Re09}]\label{definition-lwe}
Let $u,v,q$ be positive integers, and let $\alpha$ be a positive real number. Let $\bs \in \mathbb{Z}^v$ be drawn from some distribution. For any PPT adversary $\Adversary$, there is a negligible function $\negl$ such that

$$\left| \condprob{b=b'}{\bA\uniform \mathbb{Z}_q^{v\times u},\be \leftarrow D_{\mathbb{Z}^u,\alpha q}, b \uniform \{0,1\}, \\
\text{If $b=0$, $\bb= \bs^\top\bA+\be^\top$, else $\bb\uniform \mathbb{Z}_q^u$,}\\
b' \leftarrow \Adversary(\bA,\bb)\\
}-\dfrac{1}{2} \right| \leq \negl(\lambda).$$
\end{definition}
It is proved that when $\bs \uniformly \mathbb{Z}_q$ or $\bs \leftarrow D_{\mathbb{Z}^v,\alpha q}$, then breaking LWE is known to be hard for quantum computers as long as $1/\alpha$ is sub-exponential in $v$.

\subsection{Non-Interactive Zero-Knowledge Arguments}\label{section-nizk}
A non-interactive zero-knowledge argument (NIZK) allows a prover to non-interactively prove that it knows a witness to a valid statement without revealing any information about the witness. Here we recall the syntax of NIZK. 
\begin{definition}[NIZK, Adapted from \cite{LNPT20,MNSN25}]
Let $\LL=(\LL_{zk},\LL_{sound})$ be a gap language with corresponding relation $\RR=(\RR_{zk},\RR_{sound})$ and let $\CRS$ be a set of common reference string. A NIZK argument for $\LL$ with a common reference string set $\CRS$ is a tuple $\NIZK=(\NIZK.\Setup,\NIZK.\Prove,\NIZK.\Verify)$ as follows:
\begin{itemize}
\item $\NIZK.\Setup(1^\lambda,\LL) \rightarrow \crs:$ This is an algorithm executed by a third party. This PPT algorithm outputs a common reference string $\crs \in \CRS$.
\item $\NIZK.\Prove(\crs,x,w) \rightarrow \pi:$ This is an algorithm executed by the prover to prove that it knows a witness $w$ corresponding to a statement $x \in \LL_{zk}$ s.t. $(x,w) \in \RR_{zk}$. It outputs a proof $\pi$ certifying $(x,w) \in \RR_{zk}$.
\item $\NIZK.\Verify(\crs,x,\pi) \rightarrow 0/1:$ This is an algorithm executed by the verifier. It outputs a bit $b \in \{0,1\}$ which certifies the validity of $(x,\pi)$.
\end{itemize}
\end{definition}

We require an NIZK to satisfy three properties: \textit{correctness, adaptive soundness} and \textit{multi-theorem zero-knowledge}, where the language $\LL_{sound}$ is used in the adaptive soundness property. However, due to limited space, we do not describe these properties here because we do not use them in this paper. The properties are only used in \cite{MNSN25} for the security of the PVSS there. In this paper, we only require the syntax of the NIZK to describe the PVSS of \cite{MNSN25} in Section \ref{section-pvss}. Thus, for the definition of these properties, we refer to \cite{CCHLRRW19,LNPT20,MNSN25}.

\subsection{Publicly Verifiable Secret Sharing}\label{section-pvss-def}
We recall the definition of PVSS and its security properties.

\begin{definition}[PVSS, Adapted from \cite{MNSN25}]
    A $(n,t)$-PVSS with $0 \leq t < n/2$ is a tuple of algorithms $\PVSS=(\PVSS.\Setup,$ $\PVSS.\Key\Gen, \PVSS.\Key\Verify,$ $\PVSS.\Share,$ $\PVSS.\Share\Verify,\PVSS.\Dec\Verify,$ $\PVSS.\Combine)$, specified as below. 
    \begin{itemize}
    \item $\PVSS.\Setup(1^\lambda) \rightarrow \pp:$ This algorithm is run by a trusted third party. On input the security parameter $\lambda$, it returns a public parameter $\pp$.
    \item $\PVSS.\Key\Gen(\pp) \rightarrow ((\pk,\sk),\pi):$ This algorithm is run by each participant. It returns a public-secret key pair $(\pk,\sk)$ and a proof $\pi$ of valid key generation.
    \item $\PVSS.\Key\Verify(\pp,\pk,\pi) \rightarrow 0/1:$ This algorithm is run by a public verifier; it outputs a bit $0$ or $1$ certifying the validity of the public key $\pk$.
    \item $\PVSS.\Share(\pp,(\pk_i)_{i=1}^{n'},s,n',t) \rightarrow (E=(E_i)_{i=1}^{n'},\pi):$ This algorithm is executed by the dealer to share the secret $s$ for $n' \leq n$ participants. It outputs the ``encrypted shares'' $E=(E_i)_{i=1}^{n'}$ and outputs a proof $\pi$ for correct sharing.
    \item $\PVSS.\Share\Verify(\pp,(\pk_i)_{i=1}^{n'},n',t,E,\pi) \rightarrow 0/1:$ This algorithm is run by a verifier,  it outputs a bit $0$ or $1$ certifying the validity of the sharing process.
    \item $\PVSS.\Dec(\pp,\pk_i,\sk_i,E_i) \rightarrow (s_i,\pi):$ This algorithm is executed by participant $P_i$. It generates a decrypted share $s_i$ and a proof $\pi$ of correct decryption.
    \item $\PVSS.\Dec\Verify(\pp,(\pk_i,E_i,s_i),\pi):$ This algorithm is run by a public verifier, it outputs a bit $0$ or $1$ certifying the validity of the decryption process.
    \item $\PVSS.\Combine(\pp,S,(s_i)_{i \in S}) \rightarrow s/\perp:$ This algorithm is executed by a public verifier. For a set $S$ and a tuple of shares $(s_i)_{i \in S}$, it outputs the original share $s$ or $\perp$ if the secret cannot be reconstructed.
   
    \end{itemize}
\end{definition}
We require PVSS to satisfy \textit{correctness, verifiability, and IND2-privacy}. For correctness, if an honest dealer shares $s$, then it will output $s$ in the reconstruction phase. For verifiability, if the dealer and all participants have passed verification to share a secret $s$, then the sharing and reconstruction process must be done correctly. For IND2-privacy, we require that for any secrets $s^0,s^1$, it is infeasible for an adversary to distinguish between the transcript of sharing of $s^0$ and $s^1$.

\begin{definition}[Correctness \cite{MNSN25}]\label{def-correctness}
We says that $\PVSS$ achieves \textup{correctness} if for any PPT adversary $\Adversary$, the game $\GAME^{\PVSS-\Correctness}(\Adversary)$ in Figure \ref{figure-game-correctness} outputs $1$ with probability $1-\negl(\lambda)$ for some negligible function $\negl$. 

\begin{framedfigure}[Game $\GAME^{\PVSS-\Correctness}(\Adversary)$ \label{figure-game-correctness}] 
\vspace{0.05cm}
\footnotesize $\pp\leftarrow\PVSS.\Setup(1^\lambda)$, \footnotesize $\CC\leftarrow\Adversary(\pp)$. If $|\CC|>t$ return $0$.\\
\footnotesize $((\pk_i,\sk_i),\pi^0_i) \leftarrow \PVSS.\Key\Gen(\pp)~\forall~i\not \in \CC$, $(\pk_i,\pi_i^0)_{i \in \CC}\leftarrow\Adversary(\pp,\CC)$,\\
\footnotesize \textcolor{gray}{Correctness of key generation.}\\
\footnotesize If $\exists ~i \not \in \CC$ s.t. $\PVSS.\Key\Verify(\pp,\pk_i,\pi^0_i)=0$, return $0$.\\
\footnotesize Let $G=\{i \in [n]~|~\PVSS.\Key\Verify(\pp,\pk_i,\pi^0_i)=1\}$. \footnotesize Assume that $G=[n']$ and $[n] \setminus \CC \subseteq G$. If not, we re-enumerate the participants $P_i$ with $i \in G$ with an element in $[n']$.\\
\footnotesize $s\uniformly \mathbb{Z}_p,(E=(E_i)_{i=1}^{n'},\pi^1) \leftarrow \PVSS.\Share(\pp,(\pk_i)_{i=1}^{n'},s,n',t)$.\\ 
\footnotesize \textcolor{gray}{Correctness of sharing.}\\
\footnotesize If $\PVSS.\Share\Verify(\pp,(\pk_i)_{i=1}^{n'},n',t,E,\pi^1)=0$, return $0$.\\

\footnotesize $(s_i,\pi^2_i) \leftarrow \PVSS.\Dec(\pp,\pk_i,\sk_i,E_i)~\forall~i \not \in \CC$,\\ 
\footnotesize $(s_i,\pi^2_i)_{i  \in [n']\cap \CC}\leftarrow \Adversary(\pp,(\pk_i,\pi^0_i)_{i \in [n']},E,\pi^1,(s_i,\pi^2_i)_{i \not \in \CC})$.\\
\footnotesize \textcolor{gray}{Correctness of share decryption.}\\
If $\exists~i \not \in \CC$ s.t. $\PVSS.\Dec\Verify(\pp,\pk_i,E_i,s_i,\pi^2_i)=0$, \footnotesize return $0$.\\
\footnotesize \textcolor{gray}{Correctness of reconstruction. Any $t+1$ participants who passed $\PVSS.\Dec\Verify$ must agree on $s$.}\\
\footnotesize Let $S=\{i~|~\PVSS.\Dec\Verify(\pp,E_i,s_i,\pi^2_i)=1\}$. If $|S| < t+1$, return $0$. If there exists some $S' \subseteq S$, $|S'| \geq t+1$  such that $\PVSS.\Combine(\pp,S',(s_i)_{i \in S'}) \neq s$, return $0$.\\
\footnotesize Return 1.

\end{framedfigure}
\end{definition}

For verifiability, we need that i) If $(E,\pi)$ is accepted by $\PVSS.\Share\Verify$, then after honestly decrypting $E_i$ from $\PVSS.\Dec$ to obtain $s_i$, it holds that $(s_1,s_2,\dots,s_n)$ are valid shares of some secret $s$ and ii) If participant $P_i$ submits $s'_i$ that causes $\PVSS.\Dec\Verify$ to accept, then $s'_i=s_i$. In this way, from a valid transcript $(E,\pi)$, participants would agree on a unique $s$ in the reconstruction phase, and a verifier will be convinced that both the sharing and reconstruction phases for $s$ are done correctly. Due to limited space, we only recall the definition of valid share language and verifiability. For more details, we refer to \cite{MNSN25}. 

\begin{definition}[Valid Share Language \cite{MNSN25}]\label{definition-well-formedness}
We say that $\LL^\Share_{t} \subseteq \bigcup_{i=t+1}^n \mathbb{Z}_p^i$ is a \textup{valid share language} if:  For any $n' \leq n$ and $(s_1,s_2,\dots,s_{n'}) \in \LL^{\Share}_t$, there is a value $s \in \mathbb{Z}_p$ such that for any $S \subseteq[n']$ with $|S| \geq t+1$, it holds that $\PVSS.\Combine(\pp,S,(s_i)_{i \in S})=s$.
\end{definition}

\begin{definition}[Verifiability \cite{MNSN25}] \label{definition-verifiability}
 We say $\PVSS$ achieves \textup{$(\LL^\Key,\LL^\Share_{t})$ -verifiability} if i) Each instance in $\LL^\Key$ has a unique witness, ii) $\LL^\Share_{t}$ is a valid share language and iii) if for any PPT adversary $\Adversary$, the game $\GAME^{\PVSS-\Verify}(\Adversary)$ in Figure \ref{figure-game-verifiability} outputs $1$ with negligible probability $\negl(\lambda)$.

\begin{framedfigure}[Game $\GAME^{\PVSS-\Verify}(\Adversary)$ \label{figure-game-verifiability}] 
\vspace{0.1cm}
\footnotesize $\pp\leftarrow\PVSS.\Setup(1^\lambda)$. Parse $\pp=(\pp',\pp^\star)$.  $\CC\leftarrow\Adversary(\pp)$. \footnotesize If $|\CC|>t$ return $0$.\\
\footnotesize $((\pk_i,\sk_i),\pi^0_i) \leftarrow \PVSS.\Key\Gen(\pp)~\forall~i\not \in G\cap \CC$, $(\pk_i,\pi_i^0)_{i \in \CC}\leftarrow\Adversary(\pp,\CC)$,\\
\footnotesize Let $G=\{i \in [n]~|~\PVSS.\Key\Verify(\pp,\pk_i,\pi^0_i)=1\}$. \footnotesize Assume that $G=[n']$ and $[n] \setminus \CC \subseteq G$. If not, we re-enumerate the participants $P_i$ with $i \in G$ with an element in $[n']$.\\

\footnotesize $(E=(E_i)_{i=1}^{n'},\pi^1)\leftarrow \Adversary(\pp,(\pk_i,\pi^0_i)_{i \in G})$.\\

\footnotesize $(s_i,\pi^2_i) \leftarrow \PVSS.\Dec(\pp,\pk_i,\sk_i,E_i)~\forall~i \not \in \CC$,\\ 
\footnotesize $(s'_i,\pi^2_i)_{i  \in G\cap \CC}\leftarrow \Adversary(\pp,(\pk_i,\pi^0_i)_{i \in [n]},E,\pi^1,(s_i,\pi^2_i)_{i \not \in \CC})$.\\
\footnotesize \textcolor{gray}{Verifiability of key generation.}\\ If $(\pp',\pk_i) \not \in \LL^\Key$ for some $i \in G \cap \CC$, return $1$.\\
\footnotesize \textcolor{gray}{Verifiability of sharing.}\\
\footnotesize At this point, consider unique $(\sk_i)_{i \in G\cap \CC}$ s.t. $((\pp',\pk_i),\sk_i) \in \RR^\Key~\forall i \in G\cap \CC$.\\
\footnotesize Let $(s_i,.)\leftarrow \PVSS.\Dec(\pp,\pk_i,\sk_i,E_i)~\forall~i \in G \cap \CC$. If $(s_1,s_2,\dots,s_{n'}) \not \in \LL^\Share_{t}$ and $\PVSS.\Share\Verify(\pp,(\pk_i)_{i=1}^{n'},n',t,E,\pi^1)=1$, return $1$.\\
\footnotesize \textcolor{gray}{Verifiability of decryption.}\\
\footnotesize If $s'_i \neq s_i$ and $\PVSS.\Dec\Verify(\pp,\pk_i,E_i,s'_i,\pi^2_i)=1$ for some $i \in G\cap \CC$, return $1$.\\
\footnotesize If $\PVSS.\Dec\Verify(\pp,\pk_i,E_i,s_i,\pi^2_i)=0$ for some $i \in G\not \in \CC$, return $1$.\\
\footnotesize Return $0$.

\end{framedfigure}
\end{definition}

\begin{definition}[IND2-Privacy \cite{MNSN25}]\label{definition-pvss-privacy}
We say that $\PVSS$ achieves \textup{IND2-privacy} if for any PPT adversary $\Adversary$, then there is a negligible function $\negl$ s.t. $\Adv^{\PVSS-\mathsf{IND}}(\Adversary)=|\Pr[\GAME^{\PVSS-\mathsf{IND}}_0(\Adversary)=1]-\Pr[\GAME^{\PVSS-\mathsf{IND}}_1(\Adversary)=1]| \leq \negl(\lambda)$, where  $\GAME^{\PVSS-\mathsf{IND}}_b(\Adversary)$ is in Figure \ref{figure-game-privacy}.

\begin{framedfigure}[Game $\GAME^{\PVSS-\mathsf{IND}}_b(\Adversary)$ with supporting interactive oracle $\OO_{\PVSS,\Adversary}(.)$ \label{figure-game-privacy}] 
\vspace{0.1cm}
\footnotesize \underline{$\GAME^{\PVSS-\mathsf{IND}}_b(\Adversary)$:}\\
\footnotesize $\pp\leftarrow\PVSS.\Setup(1^\lambda)$, $\CC\leftarrow\Adversary(\pp)$. If $|\CC|>t$ return $0$.\\
\footnotesize $((\pk_i,\sk_i),\pi^0_i) \leftarrow \PVSS.\Key\Gen(\pp)~\forall~i\not \in \CC$, \footnotesize $(\pk_i,\pi_i^0)_{i \in \CC}\leftarrow\Adversary(\pp,\CC)$,\\
\footnotesize Let $G=\{i\in [n]~|~\PVSS.\Key\Verify(\pp,\pk_i,\pi^0_i)=1\}$. \footnotesize Assume that $G=[n']$ and $[n] \setminus \CC \subseteq G$. If not, we re-enumerate the participants $P_i$ with $i \in G$ with an element in $[n']$.\\
\footnotesize $(s^0,s^1)\leftarrow \Adversary^{\OO_{\PVSS,\Adversary}(.)}(\pp,(\pk_i,\pi^0_i)_{i \in [n']})$.\\
\footnotesize \textcolor{gray}{Challenge phase.}\\
\footnotesize $(E^b,\pi^b) \leftarrow \PVSS.\Share(\pp,s^b,n',t)$.\\
\footnotesize $b' \leftarrow \Adversary((\pp,(\pk_i,\pi^0_i)_{i \in [n']},s^0,s^1,E^b,\pi^b)$.\\
\footnotesize Return $b'$.
\vspace{0.2cm}

\footnotesize \underline{Interactive oracle $\OO_{\PVSS,\Adversary}(s):$}\\
\footnotesize $(E=(E_i)_{i=1}^{n'},\pi^1)\leftarrow \PVSS.\Share(\pp,(\pk_i)_{i=1}^{n'},s,n',t)$.\\
\footnotesize $(s_i,\pi^2_i) \leftarrow \PVSS.\Dec(\pp,\pk_i,\sk_i,E_i)~\forall~i \not \in \CC,(s_i,\pi^2_i)_{i  \in \CC}\leftarrow \Adversary(\pp,(\pk_i,\pi^0_i)_{i \in [n']},E,\pi^1)$.\\
\footnotesize Let $S_2=\{i \in G~|~\PVSS.\Dec\Verify(\pp,E_i,s_i,\pi^2_i)=1\}$.\\
\footnotesize Return $\PVSS.\Combine(\pp,(s_i)_{i \in S})$.
\vspace{0.2cm}
\end{framedfigure}

\end{definition}

\subsection{Decentralized Random Number Generator}
We recall the definition of DRNG, which is adapted from \cite{MHN23} with minor modifications to capture that participants are given a common CRS.
\begin{definition}[DRNG, Adapted from \cite{MHN23}]
 A $(t,n)-$\textup{DRNG} on a set of participants $\mathcal{P}=\{P_1,P_2,\dots,P_n\}$ with output space $U$ is an epoch-based protocol, each epoch consists an algorithm $\DRNG.\Setup,$ two interactive protocols $\DRNG.\Init$, $\DRNG.\RandGen$ for participants in $\mathcal{P}$, and algorithms  $\DRNG.\Verify$  as follows:
\begin{enumerate}
\item $\crs \leftarrow \DRNG.\Setup(1^\lambda):$ On input a security parameter $1^\lambda$, this PPT algorithms returns a common reference string $\crs$.
\item $(\st,\qual,\pp,\{\sk_i\}_{i \in \qual})\leftarrow \DRNG.\Init(\crs)\interaction{\{P\}_{P \in \mathcal{P}}}$: This is an one-time protocol run by all participants in $\mathcal{P}$ given a common reference string $\crs$ to determine the list of qualified committees. At the end of the interaction, a set $\qual$ of qualified participants is determined, a global state $\st$ is initialized, and a list $\pp$ of public information is known to all participants. Each participant $P_i$ with $ i\in \qual$ also obtains his secret key $\sk_i$, only known to him.
\item $(\st:=\st',\qual,\Omega,\pi)\leftarrow \DRNG.\RandGen(\crs,\st,\pp)\interaction{\{P_i(\sk_i)\}_{i \in \qual}}:$ This is an interactive protocol between participants in a set $\qual$ each holding the secret key $\sk_i$ and common inputs $\st,\pp$. It is executed in each epoch. In the end, all honest participants output a value $\Omega \in U$ and a proof $\pi$ certifying the correctness of $\Omega$ made by the interaction. In addition, the global state $\st$ is updated into a new state $\st'$.
\item $b \leftarrow \DRNG.\Verify(\crs,\st,\Omega,\pi,\pp)$: This algorithm is run by a verifier. On input a common reference string $\crs$ (generated by a third party), the current state $\st$, a value $\Omega$, a proof $\pi$, a public parameter $\pp$ (generated by participants), this algorithm output a bit $b \in \{0,1\}$ certifying the correctness of $\Omega$.
\end{enumerate}
\end{definition}
As in \cite{MHN23} and previous works \cite{SJSW18,CSS19,GLOW20,DKMR21,BSKN23}, we require a DRNG to satisfy the four properties: \textit{pseudorandomness, Bias-resistance , liveness, public verifability}. 
\begin{definition}[Security of DRNG, Adapted from \cite{MHN23}]\label{definition-secure-drng} A \textup{secure} DRNG protocol is a DRNG protocol satisfying the following properties.
\begin{itemize}
\item \textbf{Pseudorandomness}. Let $\Omega_1,\Omega_2,\dots,\Omega_r$ be outputs generated so far. We say that a $(t,n)-$DRNG satisfies \textup{pseudo-randomness} if for any future outputs $\Omega_j$ where $j>r$ that has not been revealed, for any PPT adversaries $\Adversary$  who corrupts $t$ participants in $\mathcal{P}$, there exists a negligible function $\negl$ such that
\begin{equation*}
\left|\Pr\left[\Adversary(\Omega_j)=1\right]-\Pr\left[\Adversary(Y)=1\right]\right|\leq\negl(\lambda),
\end{equation*}
where $Y \uniformly U$ is an element chosen uniformly at random from the set $U$. 
\item \textbf{Bias-resistance.} For any adversary $\Adversary$ who corrupts $t$ participants, it cannot affect future random outputs for his own goal. 

\item \textbf{Liveness.} 
For any epoch, and for any adversary $\Adversary$ who corrupts  $t$ participants, the $\DRNG.\RandGen$ protocol is guaranteed to produce an output. 

\item \textbf{Public Verifiability.} Given  $\crs,\st',\pp,\Omega^*,\pi^* \in \{0,1\}^*$, an external verifier can run $\DRNG.\Verify(\crs,\st',\pp,\Omega^*,\pi^*)$ to determine the correctness of $\Omega^\star$.

\end{itemize}
\end{definition}

Several constructions, such as \cite{SJSW18,CSS19}, consider the weaker unpredictability property, which only requires that $\Adversary$ cannot correctly guess the future output. However, as pointed out by \cite{GLOW20}, pseudorandomness implies unpredictability. Thus, we will adapt the pseudorandomness property from \cite{MHN23}.

\section{The Underlying Lattice-Based PVSS}\label{section-the-underlying-pvss}
To construct a DRNG based on a PVSS in the standard model, we first need a lattice-based PVSS in the standard model. Hence, we briefly recall the PVSS construction in \cite{MNSN25}, which is our choice of PVSS for constructing the DRNG. 
\subsection{The PVSS Components}\label{section-pvss-components}
The PVSS in \cite{MNSN25} requires five components: A secret sharing scheme (SSS), a public key encryption scheme (PKE), and the NIZKs for key generation, sharing, and decryption. We briefly recall these components. For the SSS, the construction relies on Shamir SSS in Figure \ref{figure-sss}. For the PKE, the construction uses the ACPS PKE of \cite{ACPS09} in Figure \ref{figure-pke}, whose security relies on the LWE assumption in Section \ref{section-lattice}. In the PKE, each public key has a unique secret key, which is needed for the verifiability property (Definition \ref{definition-verifiability}).

\begin{framedfigure}[Shamir Secret Sharing Scheme \label{figure-sss}]
\vspace{-0.3cm}
\begin{itemize} 
\item $\SSS.\Share(\pp,s,n,t):$ Chooses a polynomla $p(X) \in \mathbb{Z}_p[X]$ of degree $t$. Compute $s_i=p(i) \pmod{p}$. Return $(s_i)_{i=1}^n$.
\item $\SSS.\Combine(\pp,S,(s_i)_{i \in S}):$ Compute $s=\sum_{i \in S} \lambda_{i,S}\cdot  s_i \pmod{p}$, where $\lambda_{i,S}=\prod_{j \in S,j \neq i}=j/(j-i) \pmod{p}$ are the Lagrange coefficients.
\end{itemize}
\end{framedfigure}

\begin{framedfigure}[The ACPS Public Key Encryption Scheme \label{figure-pke}]
\vspace{-0.3cm}
\begin{itemize}
\item $\PKE.\Setup(1^\lambda):$ Consider two positive integers $p,q$ with $q=p^2$ and $p$ is a prime number. Let $u,v,r$ be positive integers and $\alpha,\beta$  be positive real numbers.  Generate $\bA \uniform \mathbb{Z}_q^{v \times u}$, Return $(\bA,u,v,\alpha,\beta,r,p,q)$.
\item $\PKE.\Key\Gen(\bA):$ Sample $\bs \leftarrow D_{\mathbb{Z}^v,\alpha q},~\be \leftarrow D_{\mathbb{Z}^u,\alpha q}$. Repeat until $||\bs||<\sqrt{v} \cdot \alpha q$, $||\be||<\sqrt{u} \cdot \alpha q$. Compute $\bb=\bs^\top\cdot\bA+\be^\top \pmod{q}$. 
Return $(\pk,\sk)=(\bb,\bs)$.
\item $\PKE.\Enc(\bA,\bb,m):$ To encrypt a message $m \in \mathbb{Z}_p$, sample $\br \leftarrow D_{\mathbb{Z}^u,r}, e \leftarrow D_{\mathbb{Z},\beta q}$ and compute $\bc_1=\bA\cdot \br \pmod{q},~\bc_2=\bb\cdot\br+e+p\cdot m \pmod{q}$, where $\beta=\sqrt{u}\cdot \log u \cdot (\alpha+\frac{1}{2\cdot q})$. Return $(\bc_1,\bc_2)$.
\item $\PKE.\Dec(\bA,\bb,\bs, (\bc_1,\bc_2)):$ Compute $f=\bc_2-\bs^\top \bc_1 \pmod{p}$ and cast $f$ as an integer in $[-(p-1)/2,(p-1)/2]$. Return $m=(\bc_2-\bs^\top \bc_1-f)/p \pmod{p}$ with $f$ as the additional witness for decryption.
\end{itemize}
\end{framedfigure}

 Finally, we briefly recall the necessary NIZKs (Section \ref{section-nizk}) in \cite{MNSN25} to give a high-level overview of them. We need three NIZKs in total. In the PVSS of \cite{MNSN25}, these NIZKs are used to prove the following:
\begin{itemize}
\item NIZK for correct key generation. After generating the public key $\bb$ in Figure \ref{figure-pke}, the participant needs to convince the verifier that there exist secrets $\bs,\be$ corresponding to it. More specifically, given a statement $(\bA,\bb)$, it uses the NIZK to prove that there are witnesses $(\bs,\be)$ s.t. $\bb^\top=\bs^\top\cdot \bA+\be^\top \pmod{q}$ for some $\bs,\be$ having a small norm.

\item NIZK for correct sharing. According to \cite{MNSN25}, given that $(\bA,\bb_i)$ are valid public keys for all $i \leq n$, when sharing a secret, the dealer must prove that i) the ciphertexts are valid encryption and ii) the shares are valid shares of some secret $s$. In \cite{MNSN25}, we see that, for Shamir secret sharing scheme, condition ii) can be captured by using a parity check matrix $\bH^t_n \in \mathbb{Z}_p^{n \times (n-t-1)}$ s.t. $\bm=(m_1,\dots,m_n)$ is a valid share vector iff $\bm^\top\cdot \bH^t_n=\bzero \pmod{p}$. In the best case, the dealer, given the encryption and public keys $(\bA,(\bb_i,\bc_{1i},\bc_{2i})_{i=1}^n)$ would prove the existence of witnesses $(\br_i,e_i,m_i)_{i=1}^n$ st. i) $\bc_{1i}=\bA\cdot\br_i \pmod{q},\bc_{2i}=\bb_i\cdot \br_i+p\cdot m_i+e_i \pmod{q}$, $\br_i,e_i$ having a small norm for all $1\leq i \leq n$, and ii) $\bm^\top\cdot \bH^t_n=\bzero \pmod{p}$. In the worst case, then if a verifier accepts, it will be convinced that the value $m_i$ obtained by honestly decrypting $(\bc_{1i},\bc_{2i})$ using the secret key $\bs_i$ of $\bb_i$ must satisfy $\bm^\top\cdot \bH^t_n=\bzero \pmod{p}$, meaning that they are valid shares of some secret (such as $\bs_i$ must exist as $\bb_i$ is a valid public key). 
\item NIZK for correct decryption. Finally, each participant decrypts the shares and proves that the decryption result in Figure \ref{figure-pke} is correct.  More specifically, after obtaining $s_i$ from decryption, then from the statement $(\bA,\bb_i,(\bc_{1i},\bc_{2i}),s_i)$, participant $P_i$ needs to prove the existence of witnesses $(\bs_i,\be_i,f_i)$ s.t. $\bb_i=\bs_i^\top\cdot \bA+\be_i^\top \pmod{q}, \bc_{2i}-p\cdot s_i=\bs_i^\top\cdot \bc_{1i}+f_i \pmod{q}$ and $\bs_i,\be_i,f_i$ having small norm to convince the verifier that it has decrypted correctly.

\end{itemize}
 Due to limited space, we refer the formal NIZK constructions, and the security of the NIZKs to \cite{MNSN25}. The special property of the NIZKs is that they are proven secure in the standard model by combining \textit{trapdoor $\Sigma$-protocols} (see \cite{MNSN25}) and the compiler of \cite{LNPT20} instead of using the Fiat-Shamir paradigm. Note that the NIZKs require a third party to generate a CRS using the setup algorithm. However, the CRS only needs to be \textit{generated once}, then it can be used by participants to generate multiple random outputs (it remains the same even if participants are replaced). We believe this is acceptable, as remarked in Section \ref{section-our-contribution}. Finally, the NIZK compiler in \cite{LNPT20} has many complicated components that would make it hard to provide an exact cost. However, we can still estimate the cost of the NIZKs to be the cost of the trapdoor $\Sigma$-protocols, as analyzed in \cite{MNSN25}.

\subsection{The PVSS Construction}\label{section-pvss}
Now, we summarize the lattice-based PVSS of \cite{MNSN25} from the components above. It depends on the following primitives: i) the lattice-based public key encryption scheme $\PKE=(\PKE.\Setup,$ $\PKE.\Key\Gen,$  $\PKE.\Enc,\PKE.\Dec)$ described in Figure \ref{figure-pke}, ii) a Shamir secret sharing scheme $\SSS=(\SSS.\Share,\SSS.\Combine)$ described in Figure \ref{figure-sss} and iii) the three NIZKs $\NIZK_0,\NIZK_1,\NIZK_2$ for correct key generation, sharing, and decryption described in Section \ref{section-pvss-components}, where $\NIZK_i=(\NIZK_i.\Setup,$ $\NIZK_i.\Prove,\NIZK_i.\Verify)~\forall~0 \leq i \leq 2$ (their syntax is in Section \ref{section-nizk}).  The lattice-based PVSS construction in \cite{MNSN25} is briefly described in Figure \ref{figure-pvss}. The PVSS is proven secure in the standard model and achieves post-quantum security.

\begin{framedfigure}[The Lattice-based PVSS Construction in \cite{MNSN25} \label{figure-pvss}]
\begin{itemize}
\item $\PVSS.\Setup(1^\lambda):$ Generate $\bA \leftarrow \PKE.\Setup(1^\lambda)$ and common reference string $\crs_i$ for $\NIZK_i$ for all $i \in \{0,1,2\}$. Return $\pp=(\bA,(\crs_i)_{i=0}^2)$.
\item $\PVSS.\Key\Gen(\pp):$ Sample a public-secret key pair $(\bb,\bs) \leftarrow \PKE.\Key\Gen(\bA,\be)$ for some randomness $\be$ and provides a proof $\pi \leftarrow \NIZK_0\Prove(\crs_0,(\bA,\bb),(\bs,\be))$. Finally, return $((\bb,\bs),\pi)$.
\item $\PVSS.\Key\Verify(\pp,\bb,\pi):$ Return $\NIZK_0.\Verify(\crs_0,(\bA,\bb),\pi)$.
\item $\PVSS.\Share(\pp,(\bb_i)_{i=1}^n,s,n',t):$ The dealer enumerates all participants who passed the key verification by $\{1,2\dots,n'\}$. Calculate $(s_i)_{i=1}^{n'} \leftarrow \SSS.\Share(s,n',t)$. Then it computes $(\bc_{1i},\bc_{2i})\leftarrow \PKE.\Enc(\bA,\bb_i,s_i)$ for all $1 \leq i \leq n$ and keeps $(\br_i,e_i)$.  The proof then provides $\pi \leftarrow \NIZK_1.\Prove(\crs_1,(\bA',n',t,(\bb_i,\bc_{1i},\bc_{2i})_{i=1}^{n'}),(s_i,\br_i,e_i)_{i=1}^{n'})$. Returns $(E=(\bc_{1i},\bc_{2i})_{i=1}^{n'},\pi)$.
\item $\PVSS.\Share\Verify(\pp,(\bb_i)_{i=1}^{n'},n',t,E,\pi):$ To check the validity of the sharing, return $\NIZK_1.\Verify(\crs_1,(\bA',n',t,(\bb_i,\bc_{1i},\bc_{2i})_{i=1}^{n'}),\pi)$.
\item $\PVSS.\Dec(\pp,\bb_i,(\bc_{1i},\bc_{2i}),\bs_i):$ Compute $s_i=\PKE.\Dec(\bA,\bb_i,\bs_i,(\bc_{1i},\bc_{2i}))$ and receive additional witness $f_i$ for decryption. Then we provide a proof $\pi_i \leftarrow \NIZK_2.\Prove(\crs_2,(\bA,\bb_i,(\bc_{1i},\bc_{2i}),s_i),(\bs_i,\be_i,f_i))$. Return $(s_i,\pi_i)$.
\item $\PVSS.\Dec\Verify(\pp,\bb_i,\bc_{1i},\bc_{2i},s_i,\pi_i):$ Return $\NIZK_2.\Verify(\crs_2,(\bA,\bb_i,\bc_{1i},\bc_{2i}),s_i),\pi_i)$.
\item $\PVSS.\Combine(\pp,S,(s_i)_{i \in S}):$ If $|S| \leq t$ returns $\perp$. Otherwise, return $s=\SSS.\Combine(S,(s_i)_{i \in S})$.
\end{itemize}
\end{framedfigure}

\section{Latice-Based DRNG from PVSS in the Standard Model}\label{section-drng}
We construct a DRNG from the lattice-based PVSS from the previous section. We then sketch the security and analyze the complexity of the DRNG.
\subsection{Construction}
First, we describe a DRNG construction from any generic PVSS. The construction has been briefly described in Ouroboros \cite[Figure 12]{KRDO17} and used in \cite{CD17} for their DL-based PVSS. However, Ouroboros requires an RO outside the PVSS to generate the output. We \textit{modify} the DRNG so that we \textit{do not need to use any RO} and can instantiate the DRNG with any lattice-based PVSS.  Informally, the DRNG is as follows (assuming all participants follow the DRNG):
\begin{itemize}
\item Initially, for each $i$, each participant $P_j$ samples a public-s ecret key pair $(\pk_{ij},\sk_{ij})$ for $P_i$ to share his secret. Other participants then verify the validity of the keys and set $\qual$ to be participants who passed verification.
\item \textit{Re-enumerate} the set $\qual$ for participants 
who have passed key verification. For example, if $A,B,C$ have passed key verification, and initially they are enumerated with $2,3,5$ respectively ($\qual=\{2,3,5\}$). Then they will be re-enumerated with $1,2,3$, and $\qual$ will be updated as $\qual:=\{1,2,3\}$.
\item This is the start of an epoch. For each $i \in \qual$, participants samples $s_i \uniformly \mathbb{Z}_p$ and uses $\pk_i=(\pk_{i1},\dots,\pk_{in})$ to publish his transcript $(E_i,\pi^1_i)$. Other participants verify the validity of the transcript and denote $\qual'$ as the set of participants who published a valid sharing transcript.
\item For each $i \in \qual'$, participant $P_j$ with $j \in \qual'$ decrypts $E_{ij}$ to get the share $s_{ij}$ of $s_i$ and reveals it with its proof $\pi^2_{ij}$. Other participants then verify the decrypted shares $s_{ij}$ above.
\item If there are at least $t+1$ valid shares for $s_i$, it is recovered using Lagrange interpolation. Then the value  $\Omega_r=\sum_{i \in \qual'} s_i \pmod{p}$ is computed. The proof of the DRNG is all the public transcripts so far. Note that $\qual$ remains the same, and in the next epoch, those in $\qual$ are still allowed to share their secret (even if they are not in $\qual'$).
\end{itemize}

Unlike Ouroboros, our DRNG only relies on a $(n,t)$-PVSS $\PVSS=(\PVSS.\Setup,\PVSS.\Key\Gen \PVSS.\Key\Verify,$$\PVSS.\Share,\PVSS.\Share\Verify,\PVSS.\Dec,$ $\PVSS.\Dec\Verify,\PVSS.\Combine)$ does not require any  RO. The modified generic DRNG construction is formally described in Figure \ref{figure-drng}, assuming the existence of \textit{any PVSS}. In the description, we use $\crs$ to denote the parameter generated by $\PVSS.\Setup$ instead of $\pp$ like Figure \ref{figure-pvss}, because in the DRNG syntax, we need $\pp$ to denote the public keys $(\pk_i)_{i \in \qual}$ generated by participants themselves. Finally, we note that, whenever a new participant $P_j$ would like to join in a future epoch, then in $\DRNG.\Init$, we only need to provide a single tuple $\pk_j=(\pk_{j1},\dots,\pk_{jn})$ for $P_j$ to share its secret, and the tuples $\pk_i$ for old participants are the same as long as they are in $\qual$. \\

\begin{framedfigure}[The formal DRNG construction in each epoch \label{figure-drng}]
\begin{itemize}
\item $\DRNG.\Setup(1^\lambda):$ Generates $\crs \leftarrow \PVSS.\Setup(1^\lambda)$. Return $\crs$.
\item $\DRNG.\Init(\crs)\langle\{P\}_{P \in \mathcal{P}} \rangle:$  The participants proceed as follows:
\begin{enumerate}
\item For each $j \in [n]$, each participant $P_i$ generates $(\pk_{ji},\sk_{ji},\pi^0_{ji}) \leftarrow \PVSS.\Key\Gen(\crs)$. Participant $P_i$ then broadcasts $(\pk_{ji},\pi^0_{ji})_{j=1}^n$.
\item Other participants $P_j$ verify the validity of $\pk_{ji}$ by executing $b_{ji}=\PVSS.\Key\Verify(\crs,\pk_{ji},\pi^0_{ji})$.

\item Let $\qual=\{i \in [n]~|~b_{ji}=1~\forall~j \in [n]\}$. WLOG, $|\qual|=n'$. Re-enumerate $\qual$ to be $\qual:=\{1,2,\dots,n'\}$ (i.e., each qualified remaining participant will be re-enumerated with a value in $\{1,2,\dots,n'\}$). Also, for each $i \in \qual$, let $\pk_i=(\pk_{ij})_{j \in \qual}$ and $\sk_i=(\sk_{ij})_{j \in \qual}$. 
\item Return $(\st=\perp,\qual,\pp=(\pk_{i})_{i\in\qual},(\sk_{i})_{i \in \qual})$.

\end{enumerate}
\item $\DRNG.\RandGen(\crs,\st,\pp=(\pk_i)_{i \in \qual})\interaction{\{P_i(\sk_i)\}_{i \in \qual})}:$ To generate a random output $\Omega \in \mathbb{Z}_p$, the participants jointly proceed as follows:
\begin{enumerate}
\item Each participant $P_i$ with $i \in \qual$ generates a random secret $s_i \leftarrow \mathbb{Z}_p$ and computes $(E_i,\pi^1_i) \leftarrow \PVSS.\Share(\crs,\pk_i,s_i,n',t)$ where $\pk_i=(\pk_{ij})_{j \in \qual}$.
\item Other participants verify the validity of $(E_i,\pi^1_i)$ by executing $b_i=\PVSS.\Share\Verify(\crs,\pk_i,n',t,E_i,\pi^1_i)$. Let $\qual':=\{i \in \qual~|~b_i=1\}$.
\item For each $i \in \qual'$, other participants $P_j$ with $j \in \qual'$ reveals the share $s_{ij}$ of $s_i$ by computing $(s_{ij},\pi^2_{ij}) \leftarrow \PVSS.\Dec(\crs,\pk_{ij},\sk_{ij},E_{ij})$.
\item Other participants verify the  validity of $(s_{ij},\pi_{ij})$ by computing $b_{ij}=\PVSS.\Dec\Verify(\crs,\pk_{ij},E_{ij},\pi^2_{ij})$. 
\item Let $S_i=\{j \in \qual'~|~b_{ij}=1\}$. As soon as $|S_i|\geq t+1$, participants recover $s_i$ by computing $s_i=\PVSS.\Combine(\crs,S_i,(s_{ij})_{j \in S_i})$. 
\item The final random output is defined to be $\Omega=\sum_{i \in \qual'} s_i \pmod{p}$. If some $s_i$ cannot be reconstructed, then $\Omega=\perp$.
\item The proof $\pi$ for the DRNG is simply the public transcript so far. Thus $\pi=(\qual',(\pi^0_{ij})_{i,j\in \qual'},(E_i,\pi^1_i)_{i \in \qual'},(s_{ij},\pi^2_{ij})_{i\in\qual',j \in S_i})$. 
\item Return $(\st=\perp,\Omega,\pi)$. Note that $\qual$ remains the same for all epochs, and in the next epoch, all participants in $\qual$ (including those not in $\qual'$ in the current epoch) are still allowed to share their secret like Step $1$.
\end{enumerate} 
\item $\DRNG.\Verify(\crs,\st,\Omega,\pi,\pp= (\pk_i)_{i \in \qual}):$ A public verifier proceeds as follows:
\begin{enumerate}
\item For each $i \in \qual$, check if  $\PVSS.\Key\Verify(\crs,\pk_{ij},\pi^0_{ij})=1$ for all $j \in \qual$.
\item For each $i \in \qual'$, check if $\PVSS.\Share\Verify(\crs,\pk_i,E_i,\pi^1_i)=1$.
\item For each $i \in \qual', j \in S_i$, check if $\PVSS.\Dec\Verify(\crs,\pk_{ij},E_{ij},\pi^2_{ij})=1$.
\item Compute $s_i=\PVSS.\Combine(\crs,S_i,(s_{ij})_{j \in S_i})$.
\item Finally, check if $\Omega=\sum_{i \in \qual'} s_i \pmod{p}$. Accept iff all checks pass.
\end{enumerate}
\end{itemize}
\end{framedfigure}

\noindent \textbf{Instantiation.} The compiler only requires a generic PVSS. Thus, to achieve post-quantum security and in the standard model, we propose to use the PVSS described in Section \ref{section-pvss} instead of DL-based PVSSs such as SCRAPE or Ouroboros. Our DRNG is proven secure in the standard model, assuming that participants have access to a CRS generated by 
$\PVSS.\Setup$. However, this CRS only needs to be generated once, as discussed earlier.

\subsection{Security Proof}
Here, we sketch the security proof of the DRNG. While Ouroboros provided a DRNG compiler using a generic PVSS, their compiler requires ROM and only proved the security when it is instantiated with a DL-based PVSS. Therefore, we need to show that our modified DRNG achieves all the required security when it is instantiated with a generic PVSS (consequently, we have a post-quantum secure DRNG in the standard model using the PVSS in Section \ref{section-pvss}).  
\begin{theorem}
The DRNG in Figure \ref{figure-drng} satisfies the pseudorandomness property.
\end{theorem}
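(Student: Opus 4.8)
The plan is to reduce pseudorandomness of the DRNG output $\Omega_j = \sum_{i \in \qual'} s_i \pmod p$ to the IND2-privacy of the underlying PVSS (Definition~\ref{definition-pvss-privacy}). The key observation is that since $|\qual| = n' \geq n - t > t$ (because $t < n/2$), and the adversary corrupts at most $t$ participants, there is always at least one honest participant, say $P_{i^\star}$, whose index lies in $\qual'$ for the target epoch $j$. The secret $s_{i^\star}$ of this honest participant is sampled uniformly from $\mathbb{Z}_p$ and is hidden from the adversary by IND2-privacy: the adversary sees only the encrypted shares $E_{i^\star}$, the proof $\pi^1_{i^\star}$, and the $t$ decrypted shares $\{s_{i^\star j}\}_{j \in \CC}$ belonging to corrupted parties. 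Since any $t$ shares of a degree-$t$ Shamir polynomial are distributed independently of the secret, and IND2-privacy guarantees the transcript leaks nothing computationally about $s_{i^\star}$, the output $\Omega_j$ is masked by a uniform, computationally hidden summand.

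First I would set up the reduction: given a pseudorandomness adversary $\Adversary$ that distinguishes $\Omega_j$ from uniform with non-negligible advantage, I construct an IND2-privacy adversary $\mathcal{B}$ against the PVSS. The reduction $\mathcal{B}$ simulates the entire DRNG for $\Adversary$ — running $\DRNG.\Init$ honestly for the honest parties, forwarding corrupt-party messages — and identifies the honest index $i^\star \in \qual'$. It submits the challenge pair $(s^0, s^1)$ with $s^0, s^1 \uniformly \mathbb{Z}_p$ chosen so that one completes $\Omega_j$ to a uniform value and the other to the real value; more cleanly, $\mathcal{B}$ sets the honest contributions of all other honest parties using the $\OO_{\PVSS,\Adversary}$ oracle and embeds the PVSS challenge ciphertext $E^b$ as $E_{i^\star}$. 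The challenge bit $b$ then controls whether $s_{i^\star}$ equals $s^0$ or $s^1$, which in turn shifts $\Omega_j$ by $s^1 - s^0$; since this difference is uniform over $\mathbb{Z}_p$, the $b=0$ case yields the real output and the $b=1$ case yields a uniformly re-randomized output, so $\mathcal{B}$ inherits $\Adversary$'s advantage.

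The main obstacle I anticipate is handling the interaction between the \emph{simulation of decrypted shares} and the privacy oracle. In the real protocol, after the honest dealer $P_{i^\star}$ shares, the honest parties decrypt and reveal $s_{i^\star j}$ for $j \in S_{i^\star}$; the reduction must produce these without knowing the PVSS secret key for the challenge instance. This is exactly what the oracle $\OO_{\PVSS,\Adversary}$ in Figure~\ref{figure-game-privacy} is designed to provide — it returns the combined reconstructed value — but care is needed because the DRNG reveals individual shares, not just the combined secret, whereas for the \emph{challenge} dealer $i^\star$ we must \emph{not} reveal enough shares to recover $s_{i^\star}$ ourselves. The resolution is that the pseudorandomness game (Definition~\ref{definition-secure-drng}) targets a future output $\Omega_j$ that has not yet been revealed, so the reconstruction of $s_{i^\star}$ need never be simulated for the challenge epoch; only the corrupt parties' decryptions (at most $t$, insufficient to reconstruct) are shown, matching precisely what IND2-privacy permits.

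A second, more routine point to address is bias-resistance leaking into the argument: I must argue that the adversary cannot adaptively drop $P_{i^\star}$ from $\qual'$ to remove the honest mask. Because verifiability (Definition~\ref{definition-verifiability}) forces an honest, correctly-formed sharing transcript to pass $\PVSS.\Share\Verify$, the honest $P_{i^\star}$ is guaranteed to be included in $\qual'$, and the adversary's only freedom is over corrupt contributions, which are fixed (committed via their published transcripts) \emph{before} the honest secret influences $\Omega_j$ in the simultaneous-broadcast round. Thus conditioning on any adversarial choice of corrupt shares, $\Omega_j$ remains a uniform-looking value plus a computationally hidden uniform summand, completing the reduction up to the negligible PVSS advantage.
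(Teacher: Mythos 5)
Your proposal is correct and follows essentially the same route as the paper's proof: a reduction to IND2-privacy that embeds the PVSS challenge transcript as one designated honest participant's sharing in the target epoch, submits two uniformly sampled secrets as the challenge pair, computes the output with one of them so that $b=0$ gives the real output and $b=1$ a uniformly re-randomized one, and uses the interactive oracle $\OO_{\PVSS,\Adversary}$ to simulate that participant's sharings and reconstructions in earlier epochs. The only (immaterial) differences are that the paper lets the reduction simulate the \emph{other} honest parties directly from their known keys rather than via the oracle, and your appeal to ``verifiability'' for keeping $P_{i^\star}$ in $\qual'$ should really be to correctness.
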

\begin{proof}
To show pseudorandomness, we need to show that, for any PPT adversary $\Adversary$, \textit{before the reveal phase} (Step $3$ of $\DRNG.\RandGen$), then $\Adversary$ cannot find any pattern to distinguish between the correct DRNG output given to him and a truly random output in $\mathbb{Z}_p$. To capture this setting, we consider the pseudorandomness security game $\GAME_b^{\mathsf{Psd-DRNG}}(\Adversary)$ for the DRNG in Figure \ref{figure-game-drng}.
\begin{framedfigure}[The game $\GAME_b^{\mathsf{Psd-DRNG}}(\Adversary)$ \label{figure-game-drng}] 
\begin{enumerate}
\vspace{-0.35cm}
\item The challenger computes the CRS $\crs$ from $\PVSS.\Setup$ and gives $\crs$ to $\Adversary$.
\item The adversary $\Adversary$ chooses a set of corrupted participants $\CC$ s.t. $|\CC| \leq t$. Note that $[n] \setminus \CC \subseteq \qual$.
\item The challenger executes $((\pk_{ji},\sk_{ji}),\pi^0_{ji})\leftarrow \PVSS.\Key\Gen(\crs)$ for all $i \not \in \CC, j \in [n]$ , which represents the public keys of honest participants. The adversary $\Adversary$ also sends the public keys (possibly invalid or might not send at all) of corrupted participants $(\pk_{ji},\pi^0_{ji})_{j \in [n],i \in \CC}$ as well.
\item The challenger verifies $\pk_{ij}$ by executing $\PVSS.\Key\Verify(\crs,\pk_{ji},\pi^0_{ji})$ for all $j \in [n], i\in \CC$ and excludes all participants in $\CC$ with invalid public keys. Also, it re-enumerates $\qual:=[n']$ and $\CC$ after this. 
\item For $i=1,2,\dots,r-1$, the challenger and $\Adversary$ jointly executes the DRNG to generates $\Omega_1,\Omega_2,\dots,\Omega_{r-1}$ using $\DRNG.\RandGen$ in Figure \ref{figure-drng}.
\item Eventually, in the $r$-th epoch, for each honest $P_i$ with $i \in \qual$, the challenger samples $s_i \uniformly \mathbb{Z}_p$ computes $(E_i,\pi^1_i) \leftarrow \PVSS.\Share(\crs,\pk_i,s_i,n',t)$ and broadcasts $(E_i,\pi^1_i)$. It also receives the pairs $(E_i,\pi^1_i)_{i  \in\qual \cap \CC}$ from $\Adversary$.
\item The challenger verifies the validity of the transcripts $(E_i,\pi^1_i)_{i \in\qual\cap \CC}$ by executing $\PVSS.\Share\Verify(\crs,\pk_i,n',t,E_i,\pi^1_i)$ for all $i \in \qual\cap\CC$ and determines the set $\qual'$ as in Figure \ref{figure-drng} (which includes all honest participants due to the correctness property of the PVSS).
\item For each $i \in \qual' \cap \CC$, the challenger uses $\sk_{ij}$ to restore the share $s_{ij}$ by computing $(s_{ij},.)=\PVSS.\Dec(\crs,\pk_{ij},\sk_{ij},E_{ij})$ for all $j \in \mathcal{G}$ and uses Lagrange interpolation to recover $s_i$. Note that $|\mathcal{G}| \geq t+1$, thus due to the correctness and verifiability of the PVSS, the restored result $s_i$ is the same restored secret of $P_i$ in the real execution of the DRNG except with negligible probability. 
\item Now, suppose $n' \in \qual'$ and $n' \not \in \CC$ (i.e., $P_{n'}$ is honest). Then the challenger computes $A^\star=\sum_{i \in \qual,i \neq n'} s_i \pmod{p}$. If $b=0$, then $\Omega_r=A+s_{n'} \pmod{p}$, otherwise, the challenger samples $s' \uniformly \mathbb{Z}_p$ and compute $\Omega_r=A+s' \pmod{p}$. The challenger then returns $\Omega_r$ to $\Adversary$.
\item $\Adversary$ outputs a bit $b'$, which is the result of the experiment.
\end{enumerate}
\end{framedfigure}

We see that, according to Definition \ref{definition-secure-drng}, pseudorandomnes holds if the advantage $\Adv^{\mathsf{Psd-DRNG}}(\Adversary)=|\Pr[\GAME_0^{\mathsf{Psd-DRNG}}(\Adversary)=1]-\Pr[\GAME_1^{\mathsf{Psd-DRNG}}(\Adversary)=1]|$ is negligible. Indeed, up to Step $4$ of  Figure \ref{figure-game-drng}, $\Adversary$ has received the outputs $\Omega_1,\dots,\Omega_{r-1}$ of the DRNG. In the $r$-th epoch (starting from Step $5$), when $b=0$, $\Adversary$ is given the \textit{correct} future output $\Omega_r=\sum_{i \in \qual'} s_i \pmod{p}$ of the DRNG (except with negligible probability). Otherwise, $\Adversary$ is given a \textit{random} output  in $\mathbb{Z}_p$ as $s'$ is random in $\mathbb{Z}_p$. Thus we just need to prove that $\Adv^{\mathsf{Psd-DRNG}}(\Adversary) \leq \negl(\lambda)$ for some negligible function $\negl$. Suppose otherwise, we prove that there is an adversary $\Adversary'$ that uses $\Adversary$ to break the IND2-privacy property of the PVSS in Figure \ref{figure-game-privacy}. The algorithm $\Adversary'$ is described in Figure \ref{figure-adversary}.
\begin{framedfigure}[The reduction algorithm $\Adversary'$ \label{figure-adversary}]
\vspace{-0.35cm}
\begin{enumerate}
\item First, $\Adversary'$ receives $\crs$ from the IND2-privacy challenger  (who acts on behalf of honest participants in the PVSS). It provides $\crs$ to $\Adversary$ and receices the set $\CC$ from $\Adversary$. The adversary $\Adversary'$ then submits $\CC$ to the IND2-privacy challenger.
\item  Let $\mathcal{G}$ denote the set of honest participants ($\Adversary'$ would ''act`` on their behalf when interacting with $\Adversary$) and consider an honest participant $P$. WLOG, initially, $P$ is enumerated as $n$. $\Adversary'$ honestly samples $((\pk_{ji},\sk_{ji}),\pi^0_{ji})\leftarrow \PVSS.\Key\Gen(\crs)$ for all $j \in [n] \setminus \{n\},i \in \mathcal{G}$. It also receives $(\pk_{ni},\pi^0_{ni})_{i \in \mathcal{G}}$ from the IND2-privacy challenger, which it will use as the public key for $P$.

\item $\Adversary'$ forwards $(\pk_{ji},\pi^0_{ji})_{j\in [n],i \in \mathcal{G}}$ to $\Adversary$, while also receives $(\pk_{ji},\pi^0_{ji})_{j\in[n],i \in\CC}$ from $\Adversary$. $\Adversary'$ then just verify the keys of $\Adversary$ and re-enumerates $\qual, \mathcal{G}$. Suppose $\qual=[n']$ and $P$ is re-enumerated as $n'$.
\item For $1\leq i\leq r-1$, $\Adversary'$ acts on the behalf of honest participants in the $i$-th epoch as follows:
\begin{enumerate}
\item For each $k \in \mathcal{G},k\neq n'$, it samples a secret $s'_k \uniformly \mathbb{Z}_p$ for $P_k$, then uses the PVSS with public keys $\pk_k=(\pk_{kj})_{j \in \qual}$  to share $s'_k$. 
\item For $P$, it queries random secret $s'_{n'} \uniformly \mathbb{Z}_p$ to the IND2-privacy challenger, receives the sharing transcript $(E_{n'},\pi^1_{n'})$ from $\OO_{\PVSS,\Adversary}(s'_{n'})$ (see Figure \ref{figure-game-privacy}), and then forwards it to $\Adversary$. $\Adversary'$ then verify the transcript of participants in $\qual\cap \CC$ and determines $\qual'$ for the $i$-th epoch.
\item For each $k \in \mathcal{G},k\neq n'$,  $\Adversary'$ simply reconstruct $s'_k$ it together with $\Adversary$ using the secret keys $(\sk_{kj})_{j \in \mathcal{G}}$ to decrypt the shares $(s'_{kj})_{j \in \mathcal{G}}$.
\item When needing to reconstruct $s'_{n'}$, it receives the honest shares $(s'_{n'j},\pi^2_{n'j})_{j \in \mathcal{G}}$ of $s'_{n'}$ from the IND2-privacy challenger from $\OO_{\PVSS,\Adversary}(s'_{n'})$ and forwards them to $\Adversary$. It also receives the decrypted shares and proofs $(s'_{n'j},\pi^2_{n'j})_{j \in \qual'\cap \CC}$ from $\Adversary$ and forward this to the challenger to complete the query $\OO_{\PVSS,\Adversary}(s'_{n'})$. 
\item Finally, with the secrets $(s'_k)_{k \in \qual'}$ are revealed, the value $\Omega_i$ is computed as $\Omega_i=\sum_{i \in \qual'} s'_i \pmod{p}$.
\end{enumerate}
\item  In the $r$-th epoch, for each $i \in \mathcal{G} \setminus \{n'\}$, $\Adversary'$ samples $s_i \leftarrow \mathbb{Z}_p$ and computes $(E_i,\pi^1_i) \leftarrow \PVSS.\Share(\crs,\pk_i,s_i,n',t)$. For $n'$, $\Adversary'$ samples $s^0_{n'} \leftarrow \mathbb{Z}_p, s^1_{n'} \leftarrow \mathbb{Z}_p$ and sends $s^0_{n'},s^1_{n'}$ to the challenger (it begins the challenge phase in Figure \ref{figure-game-privacy}). It receives $(E_{n'},\pi^1_{n'})$ which is the sharing transcript of $s^0_{n'}$ or $s^1_{n'}$. $\Adversary'$ forwards $(E_i,\pi^1_i)_{i \in \mathcal{G}}$ to $\Adversary$, while also receiving $(E_i,\pi^1_i)_{i \in \qual\cap \CC}$ from $\Adversary$. 
\item $\Adversary'$ verifies all the transcripts $(E_i,\pi^1_i)_{i \in \CC}$ and determines the set $\qual'$. For each $i \in \qual' \cap \CC$, $\Adversary'$ uses $\sk_{ij}$ to restore the share $s_{ij}$ for all $j \in \mathcal{G}$ and uses Lagrange interpolation to recover $s_i$. $\Adversary'$ then computes $\Omega_r=s^0_{n'}+\sum_{i \in \qual',i \neq n'} s_i \pmod{p}$ and sends $\Omega_r$ to $\Adversary$.
\item Finally, $\Adversary'$ outputs whatever $\Adversary$ outputs.
\end{enumerate}
\end{framedfigure}

In Figure \ref{figure-adversary}, when $(E_{n'},\pi^1_{n'})$ is the sharing transcript of $s^0_{n'}$, then $\Adversary'$ interacts with $\GAME^{\mathsf{IND-PVSS}}_0$ (Figure \ref{figure-game-privacy}) and has produced the transcript of $\GAME^{\mathsf{Psd-DRNG}}_0$ to $\Adversary$. Otherwise, we have $(E_{n'},\pi^1_{n'})$ is the sharing transcript of $s^1_{n'}$, in this case $s^0_{n'}$ plays the role of $s'$ in Figure \ref{figure-game-drng}, thus $\Adversary'$ interacts with $\GAME^{\mathsf{IND-PVSS}}_1$ and has produced the transcript of $\GAME^{\mathsf{Psd-DRNG}}_1$ to $\Adversary$. Hence, $\Pr[\GAME_b^{\mathsf{Psd-DRNG}}(\Adversary)=1]=\Pr[\GAME_b^{\mathsf{IND-PVSS}}(\Adversary')=1]$ for $b \in \{0,1\}$. Thus, $\Adv^{\mathsf{Psd-DRNG}}(\Adversary) \leq \Adv^{\mathsf{IND-PVSS}}(\Adversary')$. So if $\Adv^{\mathsf{Psd-DRNG}}(\Adversary)$ is non negligible, then $\Adv^{\mathsf{IND-PVSS}}(\Adversary')$ is non-negligible, contradiction due to the IND2-privacy of the PVSS (Definition \ref{definition-pvss-privacy}), which states that $\Adv^{\mathsf{IND-PVSS}}(\Adversary')$ must be negligible. Thus, the DRNG achieves pseudorandomness.  \qed
\end{proof}
\begin{theorem}
The DRNG in Figure \ref{figure-drng} satisfies the availability property.
\end{theorem}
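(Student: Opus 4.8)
The goal is to prove that $\DRNG.\RandGen$ always terminates with a well-defined output, i.e.\ that $\Omega \neq \perp$; by the construction in Figure \ref{figure-drng} this amounts to showing that for every dealer index $i \in \qual'$ the secret $s_i$ is successfully reconstructed by $\PVSS.\Combine$. The whole argument rests on the honest-majority assumption $t < n/2$: writing $H$ for the set of honest participants, we have $|H| \ge n-t > n/2 > t$, so $|H| \ge t+1$.

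First I would establish that the honest parties populate the relevant sets. Each honest $P_i$ runs $\PVSS.\Key\Gen$ honestly, so by the correctness of key generation (Definition \ref{def-correctness}) all of its public keys pass $\PVSS.\Key\Verify$ with overwhelming probability; hence $H \subseteq \qual$ and $n'=|\qual| \ge t+1$. Likewise, an honest dealer's transcript $(E_i,\pi^1_i)$ passes $\PVSS.\Share\Verify$ by the correctness of sharing, so $H \subseteq \qual'$; in particular $\qual'$ is nonempty and contains at least one honestly sampled secret.

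The crux is reconstructing $s_i$ for an \emph{arbitrary} $i \in \qual'$, including the case where $P_i$ is corrupt. Here correctness is of no use, since it assumes an honest dealer, and one must instead invoke verifiability (Definition \ref{definition-verifiability}). Because $\PVSS.\Share\Verify$ accepted the transcript, the ``verifiability of sharing'' clause guarantees that the shares $(s_{i1},\dots,s_{in'})$ obtained by honestly decrypting $E_i$ lie in the valid share language $\LL^\Share_{t}$, except with negligible probability. Simultaneously, the ``verifiability of decryption'' clause (its honest-party line) ensures that every honest decryptor $P_j$, $j \in H \subseteq \qual'$, produces a proof $\pi^2_{ij}$ accepted by $\PVSS.\Dec\Verify$, so $H \subseteq S_i$ and thus $|S_i| \ge t+1$. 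Finally, the defining property of a valid share language (Definition \ref{definition-well-formedness}) says that any $t+1$ of these shares reconstruct via $\PVSS.\Combine$ to one and the same secret $s_i \in \mathbb{Z}_p$. Consequently $\PVSS.\Combine(\crs,S_i,(s_{ij})_{j\in S_i})$ returns a well-defined $s_i \neq \perp$ for each $i \in \qual'$, and therefore $\Omega=\sum_{i \in \qual'} s_i \pmod{p}$ is always defined, which is exactly the claimed property.

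I expect the main obstacle to be precisely this last reconstruction step for a corrupt dealer that nevertheless passes share verification: the natural instinct is to appeal to correctness, but that assumes honesty of the dealer and therefore does not apply, so the argument must be routed through the valid-share-language guarantee of verifiability while using the honest majority to supply the $t+1$ mutually consistent shares that $\PVSS.\Combine$ needs. A minor but necessary bookkeeping point is to confirm that the honest parties' own decryptions are accepted and that $\qual'$ is nonempty, so that the final sum ranges over a nonempty set; both follow from the correctness clauses of Definition \ref{def-correctness}.
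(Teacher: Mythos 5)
Your proof is correct and follows essentially the same route as the paper's: verifiability of the PVSS guarantees that an accepted transcript determines a unique secret, and the honest majority $n-t\ge t+1$ supplies enough shares accepted by $\PVSS.\Dec\Verify$ for $\PVSS.\Combine$ to succeed. Your version is merely more explicit than the paper's in separating the roles of correctness (honest dealers populating $\qual'$) and verifiability (corrupt dealers that pass $\PVSS.\Share\Verify$), which is a welcome clarification but not a different argument.
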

 \begin{proof}
 Let $\qual'$ denote the set of participants passed verification after Step $3$ of $\DRNG.\RandGen$. Due to the verifiability of the PVSS, for each $i \in \qual'$, all participants would agree on some $s_i$ from $(E_i,\pi^1_i)$. It suffices to prove that all $s_i$ are reconstructed for all $i \in \qual'$ from $(E_i,\pi^1_i)$. For each $i \in \qual'$, the secrets $s_i$ can be recovered from  $t+1$ correct shares $s_{ij}$ due to the correctness and verifiability of the PVSS (those $s_{ij}$ that makes $\PVSS.\Dec\Verify$ returns $1$). Since there are $n-t \geq t+1$ honest participants, there are at least $t+1$ correct shares, so $s_i$ can be reconstructed for all $i \in \qual'$ using $\PVSS.\Combine$.  \qed
 \end{proof}
 \begin{theorem}
The DRNG in Figure \ref{figure-drng} satisfies the bias-resistance  property.
\end{theorem}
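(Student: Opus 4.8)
The plan is to exploit the \emph{commit-then-reveal} structure of the protocol, in which the commit phase irrevocably fixes both the set $\qual'$ and every contributed secret, so that the reveal phase leaves the adversary no room to manoeuvre. First I would argue that after Step~2 of $\DRNG.\RandGen$ the output is already determined. Once the sharing transcripts $(E_i,\pi^1_i)$ are broadcast and verified, the set $\qual'$ is fixed; and by the verifiability of the PVSS (Definition~\ref{definition-verifiability}) each accepted transcript $(E_i,\pi^1_i)$ with $i\in\qual'$ binds the dealer to a \emph{unique} secret $s_i\in\mathbb{Z}_p$, namely the value obtained by honestly decrypting $E_i$ with the (unique) secret keys. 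Thus $\Omega=\sum_{i\in\qual'}s_i\pmod p$ is a fixed function of the commit-phase transcript.

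Next I would rule out the two ways an adversary could try to tamper after committing, i.e. a \emph{withholding} attack. A corrupt dealer cannot retroactively change its $s_i$, because verifiability forbids a second accepting decryption proof for any $s'_i\neq s_i$. A corrupt dealer also cannot remove itself from $\qual'$ by refusing to open: by the availability theorem just proved, since there are $n-t\ge t+1$ honest parties, every $s_i$ with $i\in\qual'$ is reconstructed via $\PVSS.\Combine$ regardless of the corrupt parties' behaviour in the reveal phase. Hence the adversary's only influence on $\Omega$ is through the secrets it commits in Step~1, which are fixed simultaneously with the honest ones.

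It then remains to show that, whatever the adversary commits, the resulting $\Omega$ is (computationally) uniform and hence unbiased. Because honest parties always pass share verification (correctness of the PVSS), there is at least one honest index $j\in\qual'$, and its secret is sampled $s_j\uniformly\mathbb{Z}_p$. Writing $\Omega=s_j+\sum_{i\in\qual',\,i\neq j}s_i\pmod p$, a one-time-pad argument shows $\Omega$ is uniform provided $s_j$ is independent of the remaining summands. The remaining summands are functions of the adversary's view at commit time, so the claim reduces to the statement that $s_j$ is independent of that view.

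The hard part will be precisely this independence in the presence of a \emph{rushing} adversary who may read the honest transcripts $(E_i,\pi^1_i)$ before posting its own. Here I would invoke IND2-privacy (Definition~\ref{definition-pvss-privacy}): the transcript $(E_j,\pi^1_j)$ computationally hides $s_j$, so the adversary's committed secrets---and hence all terms $s_i$ with $i\neq j$---are independent of $s_j$ up to a negligible distinguishing advantage. Formally I would package this as a reduction identical in spirit to the pseudorandomness proof above: an adversary that biases $\Omega$ toward any target distribution would let $\Adversary'$ distinguish a sharing of $s_j$ from a sharing of a fresh uniform $s'$, contradicting IND2-privacy. Concluding, $\Omega$ is computationally indistinguishable from uniform and independent of the adversary's goal, so no corrupt coalition of size $t<n/2$ can bias the output.
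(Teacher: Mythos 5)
Your proof is correct and follows essentially the same route as the paper's: the paper likewise argues that the adversary's only avenues are the sharing phase (blocked by pseudorandomness, which you re-derive from IND2-privacy via the same reduction) and the reconstruction phase (blocked by availability/verifiability, since each accepted transcript binds its dealer to a unique secret that the $\geq t+1$ honest parties will reconstruct regardless of corrupt behaviour). Your write-up is simply more explicit than the paper's short argument about the commit-time binding and the rushing-adversary independence issue.
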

\begin{proof}
It is implied by pseudorandomness and availability. Indeed, if $\Adversary$ wishes to affect the output, it must do so during the sharing or reconstruction phase. Note that due to the pseudorandomness property, during the sharing phase, an adversary controlling $t$ participants cannot find any pattern to distinguish between the DRNG output $\Omega=\sum_{i \in \qual'} s_i \pmod{p}$ and a truly random output. During the reconstruction phase, for each $i \in \qual'$, with at least $t+1$ honest participants, the same result $s_i$ will be restored successfully (we have proved this in the availability property). Thus, the same value $\Omega$ is restored and cannot be changed, so $\Adversary$ cannot affect the output in this phase. Hence, the adversary cannot bias the output.  \qed
\end{proof}
 \begin{theorem}
The DRNG in Figure \ref{figure-drng} satisfies the public verifiability property.
\end{theorem}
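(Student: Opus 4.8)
The plan is to argue that the algorithm $\DRNG.\Verify$ in Figure \ref{figure-drng} is itself the required public verification procedure, and that it is both complete and sound relative to the outputs produced by honest executions. First I would make explicit that the proof string $\pi$ carries the entire public transcript, namely $\pi=(\qual',(\pi^0_{ij})_{i,j\in \qual'},(E_i,\pi^1_i)_{i \in \qual'},(s_{ij},\pi^2_{ij})_{i\in\qual',j \in S_i})$, so that every object the verifier inspects is already contained in $(\crs,\st,\pp,\Omega,\pi)$ and no private state is needed. This immediately discharges the syntactic requirement of Definition \ref{definition-secure-drng}: the procedure runs in polynomial time on public inputs only.

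For completeness I would observe that an honestly generated transcript passes every check: each $\PVSS.\Key\Verify$, $\PVSS.\Share\Verify$, and $\PVSS.\Dec\Verify$ returns $1$ by the correctness property of the PVSS (Definition \ref{def-correctness}), each $S_i$ contains at least the $t+1$ honest indices so $\PVSS.\Combine$ recovers $s_i$ rather than $\perp$, and the final equality $\Omega=\sum_{i \in \qual'} s_i \pmod p$ holds by the very definition of $\Omega$ in $\DRNG.\RandGen$. Hence $\DRNG.\Verify$ accepts the honest output with overwhelming probability.

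The soundness direction is where the work lies, and I would reduce it to the verifiability of the PVSS (Definition \ref{definition-verifiability}). Suppose the verifier accepts on some $(\Omega^\star,\pi^\star)$. For each $i\in\qual'$ the acceptance of $\PVSS.\Share\Verify$ forces, except with negligible probability, the encrypted shares $E_i$ to decrypt (under the unique secret keys guaranteed by $\LL^\Key$-membership of the public keys, which Step 1 establishes via verifiability of key generation) to a vector lying in the valid share language $\LL^\Share_t$, and thus to determine a \emph{unique} secret $s_i$. Likewise, the acceptance of $\PVSS.\Dec\Verify$ on each $(s_{ij},\pi^2_{ij})$ with $j\in S_i$ forces $s_{ij}$ to equal the honestly decrypted share, so the subset $(s_{ij})_{j\in S_i}$ consists of genuine shares; since $\PVSS.\Combine(\crs,S_i,(s_{ij})_{j\in S_i})\neq\perp$ can only pass when $|S_i|\geq t+1$, the valid-share-language property then pins $\PVSS.\Combine$ to that same unique $s_i$. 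Consequently each reconstructed $s_i$ is determined, and the final equality check guarantees $\Omega^\star=\sum_{i\in\qual'}s_i \pmod p$, i.e. $\Omega^\star$ is the unique correct output.

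The main obstacle I anticipate is precisely this soundness step: a malicious prover may supply a transcript in which corrupted dealers or decryptors try to smuggle in shares inconsistent with any single secret, or present an $S_i$ of insufficient size. The crux is to invoke \emph{both} halves of Definition \ref{definition-verifiability} — verifiability of sharing (the accepted ciphertexts encode a word in $\LL^\Share_t$) and verifiability of decryption (an accepted share equals the true decrypted share) — together with the uniqueness of secret keys, so that every quantity the verifier recomputes is fully determined and cannot be manipulated except on a negligible-probability event.
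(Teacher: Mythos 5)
Your proposal is correct and follows essentially the same route as the paper's proof: completeness of $\DRNG.\Verify$ from the correctness property of the PVSS (Definition \ref{def-correctness}), and soundness from the verifiability property (Definition \ref{definition-verifiability}), which forces each accepted transcript $(E_i,\pi^1_i)$ and the accepted decrypted shares to determine a unique $s_i$ and hence the correct $\Omega$. Your write-up is merely more explicit than the paper's in unpacking the role of $\LL^\Key$-uniqueness and the valid share language $\LL^\Share_t$, both of which are already bundled inside Definition \ref{definition-verifiability}.
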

\begin{proof}
The verifier can simply execute $\DRNG.\Verify$ to check the DRNG. First, it uses $\PVSS.\Key\Verify$ to check the correctness of the keys $\pk_i$, then it use $\PVSS.\Share\Verify$ to check the correctness of $(E_i,\pi^1_i)$ for all $i \in \qual'$. Finally, it uses $\PVSS.\Dec\Verify$ to check the correctness of the decrypted shares $s_{ij}$ for all $i \in \qual', j \in S_i$. For honest $P_i$, we easily see that the correctness of the PVSS (Definition \ref{def-correctness}) implies that if the secret $s_i$ is shared, it will be restored, and the verifier accepts. Conversely, even for honest participants, then verifiability (Definition \ref{definition-verifiability}) implies that if the verifier accepts, then the sharing transcript $(E_i,\pi^1_i)$ must be valid transcript of some secret $s_i$ and the decrypted shares $(s_{ij})_{j \in S_i}$ are valid shares of $s_i$. So the verifier is convinced that all participants in $\qual'$ have executed the PVSS correctly to share some secret $s_i$, which implies the correctness of $\Omega_r$ as desired.  \qed
\end{proof}

 Finally, we do not have to worry about $\Adversary$ knowing the output $\Omega$ much sooner than honest participants like those using VDFs because the algorithms $\PVSS.\Dec\Verify$, $\PVSS.\Combine$ only take polynomial time in $n,\lambda$ (see Section \ref{seciton-complexity-analysis}) to compute (not too long, hence honest participants can agree on the outputs not much longer than those with improved computational power), while it takes a lot of time (e.g., a day or several hours) to compute a VDF output.
 
\subsection{Complexity Analysis}\label{seciton-complexity-analysis}
We analyze the complexity of our DRNG, which includes the costs of $\DRNG.\Init$ and $\DRNG.\RandGen$. We include the cost of $\DRNG.\Init$ because it will be re-executed every time a participant is replaced/changed before $\DRNG.\RandGen$. The cost of $\DRNG.\Verify$ is the same as $\DRNG.\RandGen$. Note that there are two factors affecting the total time of an epoch: The computation complexity per participant and the number of rounds. One is the number of required steps for a participant to perform necessary computations (executing algorithms), while the other is the number of times required to exchange messages online (which is affected by the network delay). In \cite{MNSN25}, to construct the NIZK for a language, the authors designed the trapdoor $\Sigma$-protocol for the language first, then used the compiler of \cite{LNPT20} to achieve the NIZK. While the NIZK compiler in the standard model requires many complicated components, it would be hard to give a concrete complexity. However, as in \cite{MNSN25}, it is possible to analyze the cost of the trapdoor $\Sigma$-protocols and estimate the cost of the NIZK to be at least the trapdoor $\Sigma$-protocols, and consequently, it is possible to estimate the cost of the PVSSs as well. We will use the notation $\Omega$ to imply that the cost will be more than estimated. As in \cite{MNSN25}, we denote $v,u$ as the lattice parameters used in the PKE and $\lambda$ as the security parameter (recall Figure \ref{figure-pke}).\\ \vspace{-0.27cm}

\noindent \textbf{Communication Complexity.} In $\DRNG.\Init$, each $P_i$ has to submit his public key $\pk_{ji}$ and proof $\pi_{ji}$ using $\PVSS.\Key\Gen$. Since the total $n^2$ instance of $\PVSS.\Key\Gen$ is executed, it incurs $\Omega(n^2(u+v) \log q)$ cost. In $\DRNG.\RandGen$, when sharing the secret, the dominating computation complexity for each dealer is $\Omega(n(u+v) \log q)$ (analyzed in \cite{MNSN25}). Thus, the communication complexity in this step is $\Omega(n^2(u+v) \log q)$. Finally, each participant needs to broadcast their decrypted shares for each secret using $\PVSS.\Dec$. According to \cite{MNSN25}, the dominating cost for this is $\Omega(n^2(u+v) \log q)$ for verifying $n^2$ shares. In conclusion, the total communication complexity is estimated to be $\Omega(n^2(u+v) \log q)$. \\ \vspace{-0.27cm}

\noindent \textbf{Computation Complexity.} In $\DRNG.\Init$, each participant computes his public key $\pk_{ji}$ and proof $\pi_{ji}$ for all $j \in [n]$. It then verifies the correctness of the keys. According to \cite{MNSN25}, computing and verifying $O(n^2)$ public keys would require $\Omega(n^2uv)$ cost. In $\DRNG.\RandGen$, each participant uses $\PVSS.\Share$ to compute $(E_i,\pi^1_i)$ and then verify other transcripts. According to \cite{MNSN25}, this would require at least $\Omega(\lambda(n^3+n^2 uv))$ complexity to compute and verify $n$ transcripts. Each participant then decrypts $O(n)$ shares and verifies $O(n^2)$ decrypted shares. The complexity for this would be $\Omega(n^2uv)$. Finally, reconstructing the secret takes $O(n^2 \log^2 n)$ cost. In conclusion, the computation complexity is estimated to be  $\Omega(\lambda(n^3+n^2 uv))$. The verification cost for an external verifier is also the same. \\ \vspace{-0.27cm}

\noindent \textbf{Round Complexity.} To generate a single random value, we require two rounds in $\DRNG.\RandGen$. The first round consists of Step $1$ and Step $2$, while the second round consists of the remaining steps. In the first round, participants only need to send  $(E_i,\pi^1_i)$ in Step $1$ to other participants, and they can compute $\PVSS.\Share\Verify$ and determine $\qual'$ by themselves. In the second round, participant $P_i$ only needs to send his decrypted shares $(s_{ji},\pi^2_{ji})_{j \in \qual'}$ in Step $3$, then other participants (or anyone) can compute $\PVSS.\Dec\Verify$, $S_i,\Omega,\pi$ by themselves without needing to send any more messages. Here, 
 note that previous works in cryptography (e.g., \cite{GLOW20,CGGSP21,AJMMST23,EKT24,K24}, where non-interactive means one round) determine the number of rounds in the same way. When the information on the broadcast channel is sufficient to publicly determine the output ($\Omega$ in our case), then everyone (including an external verifier) can determine it without requiring more messages from participants. In \textit{practice}, to let everyone see and agree on $\Omega$ intermediately, then even if $\Omega$ can be computed independently and publicly, we 
 still ask participants to \textit{broadcast their result}. The result broadcast by most participants will be $\Omega$. However, like the work above, this broadcast process is more of an implementation detail and thus is not treated as a round.

\section{Conclusion}\label{section-conclusion}
In this paper, we proposed the first post-quantum DRNG from a lattice-based PVSS that can achieve security in the standard model and requires only two rounds of communication. Although the DRNG requires that participants have access to a CRS, this string needs to be generated once, and participants can use this string to generate many random numbers. Hence, we only need to place trust in a third party once. In the DRNG, we employ the technique of \cite{MNSN25}, which only allows us to estimate the cost of the DRNG so far, rather than a more concrete cost. Hence, it would be helpful if there were any optimization that allows us to achieve security in the standard model with reduced and more concrete cost. We would like to leave the above-mentioned issue for future research.

\vspace{-0.15cm}
\begin{spacing}{0.9}
\bibliographystyle{abbrv}
\bibliography{refs.bib}
\end{spacing}

\end{document}